\documentclass[11pt,tbtags,reqno]{article}
\usepackage[margin=0.96in, top=2.2cm,bottom=2.2cm]{geometry}

\usepackage[english]{babel}
\usepackage[T1]{fontenc}
\usepackage{setspace}
\usepackage{enumitem}
\usepackage[normalem]{ulem}
\usepackage{float}
\usepackage{scrextend}
\deffootnote{0em}{1.6em}{\thefootnotemark.\enskip}
\usepackage{xcolor}
\usepackage{latexsym}
\usepackage{graphicx}
\usepackage{tikz}
\usepackage{csquotes}

\usepackage{amsmath}
\usepackage[makeroom]{cancel}
\usepackage{amssymb}
\usepackage{amsthm}
\usepackage{mathtools}
\usepackage{euscript,mathrsfs}
\usepackage{bbm}
\usepackage[colorinlistoftodos]{todonotes}
\PassOptionsToPackage{hyphens}{url}
\usepackage[colorlinks=true, allcolors=blue]{hyperref}
\usepackage{titlesec}
\usepackage[title]{appendix}
\titlelabel{\thetitle.\quad}

\mathtoolsset{showonlyrefs}

\usepackage[
    citestyle=numeric,
    bibstyle=authoryear,
    dashed=false,
    maxnames=4]{biblatex}
\makeatletter
\input{numeric.bbx}
\makeatother
\renewbibmacro{in:}{}
\DeclareNameAlias{author}{family-given}

\DeclareFieldFormat*{title}{#1}
\DeclareFieldFormat*[book]{title}{\textit{#1}}
\DeclareFieldFormat*{volume}{\textbf{#1}}

\hypersetup{
    colorlinks,
    linkcolor={blue},
    citecolor={blue},
    urlcolor={blue}
}
\usepackage{yfonts}
\usepackage{indentfirst}
\usepackage[parfill]{parskip}
\newtheorem{Th}{Theorem}[section]
\newtheorem{Def}[Th]{Definition}
\newtheorem{Lem}[Th]{Lemma}
\newtheorem{Prop}[Th]{Proposition}
\newtheorem{Cor}[Th]{Corollary}
\theoremstyle{definition}
\newtheorem{remark}[Th]{Remark}

\numberwithin{equation}{section}

\newcommand{\R}{\mathbb{R}}

\newcommand{\eps}{\varepsilon}

\title{\textsc{The Game of Marginal Utilities}}
\author{
\textsc{Isaac M. Sonin}
\thanks{\textsc{Department of Mathematics and Statistics, University of North Carolina at Charlotte, Charlotte, NC 28223, USA} (e-mail: {\it imsonin@uncc.edu})}
\and
\textsc{Georgy Gaitsgori}
\thanks{\textsc{Department of Mathematics, Columbia University, 2990 Broadway, New York, NY 10027, USA} (e-mail: {\it gg2793@columbia.edu})}
\and
\textsc{Yaakov Malinovsky}
\thanks{\textsc{Department of Mathematics and Statistics, University of Maryland, Baltimore County, Baltimore, MD 21250, USA} (e-mail: {\it yaakovm@umbc.edu}). Research supported in part by BSF grant 2020063.}
}
\date{\today}

\begin{document}
\maketitle

\begin{abstract}
We study a noncooperative resource-allocation game in which $m$ players distribute fixed resources among $n$ projects and the payoff of player $j$ is given by
\[
F^j(x)
=
\sum_{i=1}^n
\frac{a_i x_i^j}{b_i+\sum_{\ell=1}^m x_i^\ell},
\]
where $a_i$ and $b_i$ are project parameters, while $x_i^j$ is the amount of resources player $j$ allocates to project $i$.
This specification combines diminishing returns with congestion generated by competitors. We prove that the game has a unique Nash equilibrium and characterize it by an equimarginal principle. We show that, after the projects are ordered by $a_i/b_i$, each player invests in an initial segment of projects, and these segments are nested across players, so the equilibrium decomposes into consecutive activity zones; players with larger resources invest weakly more in every project. In the fully active regime, where every player invests in every project, we reduce the equilibrium to a single scalar nonlinear equation for the aggregate marginal-utility rate; all individual marginal rates, investments, and payoffs then follow from explicit formulas. 
We also provide a projected marginal-utility algorithm with global linear convergence under an explicit step-size condition, together with a structure-exploiting \textit{Block Pandora} algorithm that reconstructs and certifies the equilibrium conditional on a proposed nested cutoff structure.
\end{abstract}

\noindent
{\sl AMS  2020 Subject Classification:}
91A10 (primary); 91B32, 90C33 (secondary).

\noindent
{\sl Keywords:}
resource allocation games; concave games; Nash equilibrium;
marginal utility; cutoff structure; equimarginal principle; projection algorithm.

\section{Introduction and Motivation}\label{section_introduction}

How should strategic decision makers divide fixed resources among a common collection of projects when additional investment raises the return from a project, but investment by competitors dilutes that return? 
At a deliberately macroscopic scale, one may imagine a small number of large investors choosing among a small number of global projects, for example, artificial intelligence, energy, biotechnology, or real estate; somewhat tongue-in-cheek, the model may be viewed as an ``investment problem for trillionaires.''
More formally, we study the following static noncooperative model. There are $m$ players and $n$ projects. Player $j$ has a fixed resource $r^j>0$ and chooses nonnegative allocations $x_i^j$ satisfying $\sum_i x_i^j=r^j$. Her payoff is
\begin{equation}\label{eq_introduction_payoff}
    F^j(x)
    =
    \sum_{i=1}^n
    \frac{a_i x_i^j}
         {b_i+\sum_{\ell=1}^m x_i^\ell},
\end{equation}
where $a_i>0$ is the return scale of project $i$ and $b_i>0$ is a baseline load. The baseline may represent capital already committed by agents outside the game, exogenous congestion, or an initial stock against which the players' investments are diluted. For fixed opponents, each summand is increasing and strictly concave in the player's own allocation. At the same time, an increase in a competitor's allocation enlarges the denominator and reduces the payoff of the player under consideration.
Thus, \eqref{eq_introduction_payoff} combines diminishing returns with strategic crowding in a particularly simple rational form.

The same payoff has a literal dilution interpretation. Think of project $i$ as a vessel containing an amount $a_i$ of a valuable substance in an initial volume $b_i$. Player $j$ adds a volume $x_i^j$ and later withdraws the same volume; her recovered amount of the substance is exactly the $i$th term in \eqref{eq_introduction_payoff}. The one-player version of this ``water puzzle'' was studied by Malinovsky and Sonin~\cite{SoninMalinovsky}, who obtained explicit cutoff and allocation formulas. The present paper develops the strategic multi-player counterpart, and our formulas reduce to theirs when $m=1$. A related water-filling analogy from information theory is discussed later in the introduction.

The natural optimality condition in our setting is the equimarginal principle: at equilibrium, every project used by a given player has the same marginal utility for that player, while every unused project has no larger marginal utility. These are precisely the Karush--Kuhn--Tucker conditions for the player's concave best-response problem. The main difficulty is that the active constraints are endogenous. The projects used by each player are not known in advance, and changing one allocation changes the marginal utilities of all competitors on the same project. A direct KKT formulation therefore combines a nonlinear continuous system with an active-set problem.

\paragraph{Main results.}
The results of the paper have three layers. 
The first layer places the model within the classical theory of concave games initiated by Rosen~\cite{Rosen}. Carrying out this step still requires model-specific verification, most importantly, checking Rosen's diagonal strict concavity condition, but the underlying existence and uniqueness results come from established theory, while the equimarginal characterization follows from the KKT conditions. The main contributions of the paper lie instead in the second and third layers: the detailed structural and essentially explicit solution of the equilibrium, and the construction of one globally convergent algorithm and one structure-exploiting reconstruction and certification method.

Our principal structural result concerns the equilibrium supports. Ordering the projects by their quality ratios
\(
    q_i:=a_i/b_i, \,
    q_1\ge\cdots\ge q_n,
\)
and the players by their resources,
\(
    r^1\ge\cdots\ge r^m,
\)
we show that each player $j$ invests in an initial segment $\{1,\ldots,k^j\}$ of the project list and that the cutoffs are nested:
\(
    n\ge k^1\ge\cdots\ge k^m\ge1.
\)
Consequently, the ordered projects decompose into consecutive activity zones on which the set of active players is constant. Players with larger resources have weakly smaller equilibrium marginal rates, are active in weakly more projects, and invest weakly more in every project. If, in addition, $b_1\ge\cdots\ge b_n$, then each player's equilibrium allocation is also weakly decreasing across projects.

The equilibrium is especially explicit in the fully active regime, where every player invests in every project. Let
\[
    R:=\sum_{j=1}^m r^j,
    \qquad
    B:=\sum_{i=1}^n b_i,
    \qquad
    C:=\sum_{j=1}^m c^j,
\]
where $c^j$ is the common equilibrium marginal utility rate of player $j$. We show that $C$ is the unique positive solution of the scalar equation
\begin{equation}\label{eq_introduction_basic_equation}
    R+B
    =
    \sum_{i=1}^n \frac{a_i}{2C}
    \left(
        (m-1)+
        \sqrt{(m-1)^2+\frac{4b_iC}{a_i}}
    \right).
\end{equation}
Once $C$ is known, all project loads, individual marginal rates, allocations, and payoffs follow from explicit formulas. In particular, the first step of the construction depends on the resource vector $(r^1,\ldots,r^m)$ only through its sum $R$; the individual resources enter only in the subsequent affine reconstruction. The same formulas also decide whether the fully active regime is valid, see Section \ref{section_local_structure} for more details.
To the best of our knowledge, the aggregate scalar reduction, the accompanying recovery formulas, and this support criterion have not appeared in the related multi-project allocation literature. 

Outside the fully active regime, the formulas within each activity zone remain explicit, but the zones themselves and the division of each player's resources among them must be identified. We address this remaining problem by giving two complementary constructive methods. The projected marginal-utility iteration requires no knowledge of the active sets: every player moves simultaneously in the direction of her marginal utilities and projects back onto her resource simplex. Under an explicit step-size condition, the resulting map is a contraction and converges globally at a linear rate. We also introduce the structure-exploiting \textit{Block Pandora} algorithm. Conditional on a proposed nested cutoff pattern, it recursively glues fully active zones through one-dimensional marginal-rate matching equations and reconstructs the unique strict equilibrium of the corresponding restricted game whenever a strict root is bracketed. The omitted-project marginal inequalities then give an exact answer as to whether the reconstructed profile is the Nash equilibrium of the original game. Thus, the projected method is unconditional, whereas Block Pandora is a support-based reconstruction and certification device that exposes how the equilibrium is assembled from its zones.

\paragraph{The role of the baseline and related literature.}
Having described the main results and algorithms, we now place the model in context. The cleanest comparison is obtained by removing the positive baselines, which both isolates their structural role and connects the game to the classical lottery-Blotto and proportional-allocation literature. Formally setting $b_i=0$ gives
\begin{equation}\label{eq_introduction_zero_baseline}
    \widehat F^j(x)
    =
    \sum_{i=1}^n
    a_i\frac{x_i^j}{\sum_{\ell=1}^m x_i^\ell}.
\end{equation}
For $m\ge2$, under the usual convention that an unclaimed project yields zero payoff, this game has the unique Nash equilibrium
\begin{equation}\label{eq_introduction_zero_baseline_solution}
    x_i^{j,\ast}
    =
    r^j\frac{a_i}{\sum_{h=1}^n a_h},
    \qquad i=1,\ldots,n,
    \quad j=1,\ldots,m.
\end{equation}
This proportional equilibrium is classical in the lottery-Blotto literature. Friedman~\cite{Friedman} obtained the two-player result in an advertising-allocation model, and Duffy and Matros~\cite{DuffyMatros} extended it to an arbitrary finite number of players in the common-value setting. Proposition~\ref{prop_zero_baseline_equilibrium} in Appendix~\ref{appendix_zero_baseline_equilibrium} provides a short direct proof for completeness and records the benchmark against which the role of the positive baselines can be seen. Under \eqref{eq_introduction_zero_baseline_solution}, every player uses every project and all players hold the same portfolio proportions. By contrast, positive $b_i$ can create zero allocations, ordered cutoffs, and several activity zones. The baseline therefore changes the qualitative geometry of equilibrium rather than merely regularizing the payoff at zero. Section~\ref{section_examples} illustrates this transition by scaling a fixed baseline vector continuously from zero.

The baseline-free ratio form has also been studied under richer valuation structures. Kim, Kim, and Kim~\cite{KimEtAl} consider a lottery-Blotto game in which the value of each project may depend on the player. They establish equilibrium existence for an arbitrary finite number of players and characterize all Nash equilibria in the two-player case. Viewed as a market mechanism, \eqref{eq_introduction_zero_baseline} is also a common-value linear-utility instance of the Trading Post, or Shapley--Shubik, mechanism. Br\^anzei, Gkatzelis, and Mehta~\cite{BranzeiGkatzelisMehta} study Trading Post for general concave utilities, with an emphasis on Nash social welfare and individual fairness rather than on an explicit characterization of equilibrium supports. These models permit substantially richer valuations or utility functions, but retain the zero-baseline proportional-allocation rule.

The closest positive-baseline specification that we are aware of is the continuous Chinese auction with an auctioneer ticket in each basket. Br\^anzei, Forero, Larson, and Miltersen~\cite{BranzeiEtAl} consider the more general payoff $\sum_{i=1}^n v_i^j x_i^j/(\Delta_i+\sum_{\ell=1}^m x_i^\ell)$, where the valuations $v_i^j$ may depend on the player and $\Delta_i>0$ is a ticket placed by the auctioneer.
They show, in particular, that positive auctioneer tickets guarantee the existence of a pure Nash equilibrium in the continuous game. Our model is the common-value specialization $v_i^j=a_i$ and $\Delta_i=b_i$. Although this specialization is narrower with respect to valuations, it permits substantially stronger conclusions: uniqueness for an arbitrary number of players, a nested cutoff and activity-zone structure, explicit formulas within each fully active zone, the aggregate scalar reduction \eqref{eq_introduction_basic_equation}, and the constructive methods developed in Section~\ref{section_algorithms}.

At the level of the general equilibrium framework, our starting point is Rosen's theory of concave games~\cite{Rosen}. The verification of diagonal strict concavity is specific to the present payoff, but the underlying existence and uniqueness framework is classical. The projectwise ratio form also connects the model to the linear Tullock contest success function and to rent-seeking contests~\cite{CornesHartley,Skaperdas,SzidarovszkyOkuguchi,Tullock}. Related branches of the literature include multi-prize contests~\cite{ClarkRiis}, static and dynamic multi-battle Blotto games~\cite{DuffyMatros,KimEtAl,AnbarciEtAl}, and strategic resource allocation under congestion. In the latter direction, Johari and Tsitsiklis~\cite{JohariTsitsiklis} study pricing and efficiency loss in a network resource-allocation game, whereas our emphasis is on the exact geometry and computation of the Nash equilibrium.

There is also a useful analogy with water-filling methods in information theory. In a one-agent allocation problem, and in each best-response problem here, optimality equalizes marginal returns across the active channels. Iterative water-filling algorithms use this principle to allocate power across communication channels, including in multi-user Gaussian systems~\cite{YuEtAl}. The analogy is structural rather than literal: Shannon-type channel utilities are typically logarithmic, whereas our payoff is the rational function in \eqref{eq_introduction_payoff}, and none of our equilibrium formulas follows from the information-theoretic results. Finally, the projected marginal-utility method is naturally expressed in the language of monotone variational inequalities and projection algorithms; see Facchinei and Pang~\cite{FacchineiPang} for the general framework. The explicit strong anti-monotonicity and Lipschitz estimates used to obtain a global linear rate are specific to the present model.

The rest of the paper is organized as follows. Section~\ref{section_model} introduces the model and states the main theorem. Section~\ref{section_existence_equimarginal} proves existence, uniqueness, and the equimarginal characterization. Section~\ref{section_local_structure} derives the fully active solution, the scalar equation, and the full-activity criterion. Section~\ref{section_global_structure} establishes the nested cutoff, zone, and monotonicity structure of the general equilibrium. Section~\ref{section_algorithms} presents the projected marginal-utility algorithm and the Block Pandora reconstruction algorithm. Section~\ref{section_examples} gives fully active, baseline-continuation, multi-zone, and best-response cycling examples. Section~\ref{section_open_problems} discusses extensions, welfare, and
mechanism-design questions.

\section{Model and Main Result}\label{section_model}

We consider a noncooperative game with $m$ players and $n$ projects. The lower index $i$ always denotes a project, $i=1,\ldots,n$, and the upper index $j$ always denotes a player, $j=1,\ldots,m$. Moreover, the symbol $^\ast$ will always denote an equilibrium quantity.
In particular, $x_i^j$ denotes the amount invested by player $j$ in project $i$, whereas $x_i^{j,\ast}$ denotes this amount at equilibrium. We highlight that, throughout the paper, all resources and all project parameters are positive unless stated otherwise.

Each project $i$ is described by two parameters: a \textit{return rate} parameter $a_i>0$ and a \textit{congestion} or \textit{penalization} parameter $b_i>0$; we write $a=(a_1,\ldots,a_n), \, b=(b_1,\ldots,b_n)$. Intuitively, one may think of the projects as ranked by their primitive return rates, so that larger values of $a_i$ correspond to more attractive projects. When $b_i$ is interpreted as background or market-wide investment, it is also natural in many applications to expect the more attractive projects to carry larger values of $b_i$. Thus, one may keep in mind the benchmark case in which $a$, $b$, and the quality ratios $q_i = a_i / b_i$ are all similarly ordered. These three orderings need not be compatible for arbitrary parameter vectors, however, and our results do not require separate orderings of $a$ and $b$.
The ordering relevant for the equilibrium structure is determined by the quality parameter
\[
    q_i:=\frac{a_i}{b_i}.
\]
Accordingly, we assume, without loss of generality, that
\(
    q_1\ge q_2\ge\cdots\ge q_n>0.
\)
This is the only standing ordering assumption on the projects; any additional ordering assumption, such as monotonicity of $b$, will be stated explicitly.

A strategy of player $j$ is a vector of resource allocations
\(
    x^j=(x_1^j,\ldots,x_n^j)\in X^j,
\)
where
\begin{equation}\label{def_X_j}
    X^j=\left\{x^j\in \R_+^n:\sum_{i=1}^n x_i^j=r^j\right\}.
\end{equation}
Here, $r^j > 0$ is the total resource of player $j$. Without loss of generality, we assume that $r^1 \ge r^2 \ge \dots \ge r^m$ and write $r = (r^1, \dots, r^m)$.
A strategy profile is denoted by
\begin{equation}\label{def_X}
    x=(x^1,\ldots,x^m)\in X:=X^1\times\cdots\times X^m.
\end{equation}
We call such a profile admissible. For an admissible profile $x$, we define the investment of all players except $j$ in project $i$ by
\[
    u_i^j=u_i^j(x):=\sum_{\ell\ne j}x_i^\ell,
\]
and the total \textit{load} of project $i$ by
\[
    L_i=L_i(x):=b_i+\sum_{\ell=1}^m x_i^\ell=b_i+u_i^j+x_i^j.
\]

The payoff of player $j$ from project $i$ is given by
\[
    f_i^j(x):=\frac{a_i x_i^j}{L_i(x)}
    =\frac{a_i x_i^j}{b_i+\sum_{\ell=1}^m x_i^\ell},
\]
and the total payoff of player $j$ is given by
\[
    F^j(x):=\sum_{i=1}^n f_i^j(x)
    =\sum_{i=1}^n \frac{a_i x_i^j}{b_i+\sum_{\ell=1}^m x_i^\ell}.
\]
Thus, a larger value of $a_i$ makes project $i$ more attractive, while a larger total load $L_i$ makes the same project more congested, equivalently less attractive.

For later use, we record the \textit{marginal utility} of player $j$ on project $i$:
\begin{equation}\label{marginal_utility_formula}
    c_i^j = c_i^j(x):=\frac{\partial F^j}{\partial x_i^j}(x)
    =\frac{\partial f_i^j}{\partial x_i^j}(x)
    =\frac{a_i\bigl(L_i(x)-x_i^j\bigr)}{L_i(x)^2}
    =\frac{a_i\bigl(b_i+u_i^j(x)\bigr)}{L_i(x)^2}.
\end{equation}
At this stage, $c_i^j$ is only a notation for the partial derivative. The equimarginal property, where these quantities become equal over the active projects of a fixed player, will be proved in the next section.
We denote the active set of player $j$ at a profile $x$ by
\(
    A^j(x):=\{i: x_i^j>0\}.
\)

Finally, we write $M(n,m,a,b,r)$ for the model and record the following standard definition for completeness of exposition.

\begin{Def}[Nash equilibrium]
An admissible profile $x^\ast\in X$ is a Nash equilibrium if, for every player $j=1,\ldots,m$,
\(
    F^j(x^{j,\ast},x^{-j,\ast})\ge F^j(y^j,x^{-j,\ast})
\)
for all $y^j\in X^j$.
Here, $x^{-j,\ast}$ denotes the strategies of all players except player $j$ at the profile $x^\ast$.
\end{Def}

We now summarize the main results in a compact form, whose precise versions are proved in the sections below.

\begin{Th}\label{thm_main_structure}
The game $M(n,m,a,b,r)$ has a unique Nash equilibrium $x^\ast$.  This equilibrium has the following structure.
\begin{enumerate}[label=\textup{(\roman*)}]
    \item \textup{(Equimarginal principle.)} For each player $j$, there is a common marginal utility rate $c^j$ on all projects used by that player; every unused project has marginal utility no larger than $c^j$.\label{thm_Equimarginal}

    \item \textup{(Ordered active projects.)} After projects are ordered by their quality parameters $q_i=a_i/b_i$ and players are ordered so that $r^1\ge r^2\ge\cdots\ge r^m$, each player invests only in the first $k^j$ projects for some $n \ge k^1 \ge \dots \ge k^m \ge 1$. \label{thm_Ordered}

    \item \textup{(One-zone formula.)}
    In the fully active regime, where $x_i^{j,\ast}>0$ for all $i,j$, the cumulative marginal utility rate
    \(
        C=\sum_{j=1}^m c^j
    \)
    is determined by one scalar nonlinear equation involving only the total resource
    \(
        R=\sum_{j=1}^m r^j
    \)
    and the project parameters. Once $C$ is known, all individual marginal rates, allocations, and payoffs are explicit.
    \label{thm_one_zone}

    \item \textup{(Monotonicity.)} The equilibrium is monotone across players: players with larger resources have weakly smaller marginal utility rates and invest weakly more in each project.  In the monotone case $b_1 \ge b_2 \ge \dots \ge b_n$, the allocations are also monotone across projects. \label{thm_monotonicity}

    \item \textup{(Constructive solution.)} The equilibrium can be computed by two complementary algorithms: a globally and linearly convergent projected marginal-utility method, and the model-specific \textit{Block Pandora} algorithm, which exploits the nested zone structure. \label{thm_algo}
\end{enumerate}
\end{Th}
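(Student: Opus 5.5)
The plan is to prove the five parts in the order (existence/uniqueness) $\to$ \ref{thm_Equimarginal} $\to$ \ref{thm_one_zone} $\to$ \ref{thm_Ordered}, \ref{thm_monotonicity} $\to$ \ref{thm_algo}.

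\textbf{Existence and uniqueness.} These come from Rosen's theory of concave games. Each $X^j$ is a compact convex simplex. For fixed $x^{-j}$, each $f_i^j$ is $C^2$ and strictly concave in $x_i^j$ on $[0,\infty)$, because $\partial^2 f_i^j/\partial (x_i^j)^2=-2a_i(b_i+u_i^j)/L_i^3<0$ (this uses $b_i>0$). Existence of a pure equilibrium then follows. For uniqueness we verify diagonal strict concavity with unit weights. The pseudo-gradient $g(x)=(c_i^j(x))_{i,j}$ decouples across projects, so it suffices to show, for each fixed $i$, that the $m\times m$ matrix $J_i+J_i^\top$ is negative definite. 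Here $J_i$ is the Jacobian of $(c_i^1,\dots,c_i^m)$ in $(x_i^1,\dots,x_i^m)$. Direct computation gives
\[
\partial c_i^j/\partial x_i^j=-2a_i(b_i+u_i^j)/L_i^3,\qquad \partial c_i^j/\partial x_i^\ell=a_i(L_i-2(b_i+u_i^j))/L_i^3 \quad (\ell\neq j).
\]
Writing $s_j=b_i+u_i^j=L_i-x_i^j$, we get $J_i=\frac{a_i}{L_i^3}\bigl(-L_i I-\mathbf 1\mathbf 1^\top\cdot\text{(stuff)}+\dots\bigr)$. We will try to write $J_i+J_i^\top=\frac{a_i}{L_i^3}\bigl(-2L_iI - (\mathbf 1 v^\top+v\mathbf 1^\top) + \ldots\bigr)$ with $v_j=L_i-2s_j$, and then show the quadratic form is negative via $|\mathbf 1^\top y|\,|v^\top y|$ bounds together with $b_i>0$.

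\textbf{Main obstacle.} This definiteness check is the step I expect to be hardest, since $v$ can have mixed signs. If it fails, the fallback is to prove uniqueness directly. We would take two equilibria, sum over all players the KKT inequalities $\langle c^j(x)-c^j(y),x^j-y^j\rangle$, and show the sum is negative unless $x=y$, arguing project by project.

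\textbf{Part \ref{thm_Equimarginal}.} This is just KKT for the concave best response over the simplex. Linear constraints give sufficiency and necessity, with multiplier $c^j$.

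\textbf{Part \ref{thm_one_zone}.} When all players are active, $a_i s_j/L_i^2=c^j$ for every $j$. Summing over $j$ and using $\sum_j s_j=mL_i-(L_i-b_i)=(m-1)L_i+b_i$ gives the quadratic $CL_i^2-a_i(m-1)L_i-a_ib_i=0$. Its positive root is $L_i=\frac{a_i}{2C}\bigl((m-1)+\sqrt{(m-1)^2+4b_iC/a_i}\bigr)$. Summing $L_i$ over $i$ yields \eqref{eq_introduction_basic_equation}. The right-hand side is strictly decreasing in $C$, running from $\infty$ to $0$, so $C$ is unique. Then $x_i^j=L_i-s_j=L_i-c^jL_i^2/a_i$, and summing over $i$ gives $c^j=(\sum_iL_i-r^j)/\sum_iL_i^2/a_i$, which is explicit.

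\textbf{Parts \ref{thm_Ordered} and \ref{thm_monotonicity}.} Suppose $r^j\ge r^\ell$ and, for contradiction, $c^j>c^\ell$. On each project $i$, if $x_i^j>0$ then $c_i^\ell\ge c_i^j$ once $x_i^\ell\le x_i^j$, since $c_i^j=a_i(L_i-x_i^j)/L_i^2$ decreases in own allocation at fixed $L_i$. Comparing the two expressions shows that $x_i^\ell\ge x_i^j$ on every project, with strict inequality somewhere, and this contradicts $r^j\ge r^\ell$. Hence $c^j\le c^\ell$, and then $x_i^j\ge x_i^\ell$ projectwise. That gives nested supports. For the prefix property, note that player $j$ is inactive on project $i$ iff $a_i/L_i\le c^j$. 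We would show that $i\mapsto a_i/L_i$ restricted to active projects follows $q$. Specifically, if $i<i'$, $x_{i'}^j>0$ and $x_i^j=0$, then a zone-by-zone comparison gives $c_i^j\ge a_i/L_i>$ the active rate, a contradiction; this uses $q_i\ge q_{i'}$ and the nesting of the other players. Monotonicity across projects under $b_1\ge\dots\ge b_n$ follows from the explicit zone formulas $x_i^j=L_i(1-c^jL_i/a_i)$.

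\textbf{Part \ref{thm_algo}.} The projected iteration $x\mapsto\Pi_X(x+\tau g(x))$ is a contraction once $g$ is strongly anti-monotone on $X$, which is the quantitative version of the definiteness above, and Lipschitz with constant $\Lambda$. Then $\tau<2\mu/\Lambda^2$ suffices. Block Pandora is then just the zone-gluing of the formulas from \ref{thm_one_zone}.
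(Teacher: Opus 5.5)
Your overall architecture matches the paper's: Rosen plus KKT, the one-zone quadratic for $L_i$, and the contraction estimate for $\Pi_X(x+\nu\Phi(x))$. Your argument for $c^j\le c^\ell$ is a valid and slightly more direct alternative to the paper's. Both players face the same load, so $c_i^j-c_i^\ell=a_i(x_i^\ell-x_i^j)/L_i^2$. Hence $c^j>c^\ell$ would force $x_i^\ell\ge x_i^j$ on every project, strictly on the support of $x^j$, contradicting $r^j\ge r^\ell$. The paper instead writes $r^j=\Psi(c^j)$ with $\Psi(c)=\sum_iL_i(1-c/p_i)_+$ strictly decreasing.

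However, the step on which both uniqueness and the projected method rest is left open, and the route you suggest fails as stated. The correct decomposition is $J_i+J_i^\top=\frac{a_i}{L_i^3}\bigl(-2L_iI+\mathbf 1v^\top+v\mathbf 1^\top\bigr)$ with $v_j=2x_i^j-L_i$; note the plus sign. Write $\Delta=\mathbf 1^\top y$, so that $v^\top y=2\sum_jx_i^jy_j-L_i\Delta$. Bounding $|\mathbf 1^\top y|$ and $|v^\top y|$ separately then discards the negative term $-L_i\Delta^2$. Cauchy--Schwarz gives only $\sqrt m\lVert v\rVert\,\lVert y\rVert^2$, which is close to $mL_i\lVert y\rVert^2$ when all $x_i^j$ are small, so it cannot beat $L_i\lVert y\rVert^2$ for $m\ge2$. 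Your fallback needs exactly the same inequality, so it is not an independent route.

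The missing idea is to keep that term. Write $y^\top J_iy=\frac{a_i}{L_i^3}\bigl(-L_i(\Delta^2+\lVert y\rVert^2)+2\Delta\sum_jx_i^jy_j\bigr)$. The inequalities $2\Delta y_j\le\Delta^2+y_j^2$, weighted by $x_i^j\ge0$ and summed, bound the cross term by $(L_i-b_i)(\Delta^2+\lVert y\rVert^2)$. Since $L_i\le b_i+R$ along segments in $X$, this gives $y^\top J_iy\le-\frac{a_ib_i}{L_i^3}\lVert y\rVert^2\le-\frac{a_ib_i}{(b_i+R)^3}\lVert y\rVert^2$. That yields both diagonal strict concavity and the explicit modulus $\mu$ needed for the contraction.

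For part (ii), your reduction is right: player $j$ is inactive on $i$ iff $p_i:=a_i/L_i\le c^j$. What you then need is that $p_i$ is non-increasing along the $q$-order across all projects, in particular between an inactive $i$ and an active $i'>i$. The ``zone-by-zone comparison'' you invoke is not an argument for this. The ingredient is the per-project identity obtained by summing $x_i^\ell=L_i(1-c^\ell/p_i)_+$ over $\ell$, namely $1-p_i/q_i=\sum_\ell(1-c^\ell/p_i)_+$. The right side is non-decreasing in $p$ and the left side strictly decreasing. So, for fixed rates, the solution $p$ is a strictly increasing function of $q$, and $q_i\ge q_{i'}$ gives $p_i\ge p_{i'}$.

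The same identity applied to consecutive projects shows that $L_i/b_i=q_i/p_i$ is non-increasing. Hence $L_i\ge L_{i+1}$ when $b_1\ge\cdots\ge b_n$. This is what the across-project monotonicity in part (iv) actually requires, because the formula $x_i^j=L_i(1-c^j/p_i)_+$ by itself does not compare different projects.

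Finally, calling Block Pandora ``just zone-gluing'' omits the substance of part (v). One must show that the gluing function $H$ is strictly increasing on strict branches, which again uses the anti-monotonicity estimate together with the layer aggregation identity. That makes the root unique and lets it recover the restricted equilibrium. One must also show that the omitted-project inequalities certify that this restricted equilibrium is the equilibrium of the original game.
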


Theorem~\ref{thm_main_structure} is proved in stages. Formally, it is the combination of Propositions~\ref{thm_existence_uniqueness}, \ref{thm_one_zone_formulas}, \ref{thm_cutoff_zone_structure}, \ref{thm_global_monotonicity}, \ref{thm_projected_algorithm_convergence}, and \ref{prop_block_pandora_correctness}. More precisely, Proposition~\ref{thm_existence_uniqueness} proves existence and uniqueness and gives part~\ref{thm_Equimarginal}; Proposition~\ref{thm_one_zone_formulas} gives part~\ref{thm_one_zone}; Propositions~\ref{thm_cutoff_zone_structure} and \ref{thm_global_monotonicity} give parts~\ref{thm_Ordered} and \ref{thm_monotonicity}, respectively; and Propositions~\ref{thm_projected_algorithm_convergence} and \ref{prop_block_pandora_correctness} give the two constructive statements in part~\ref{thm_algo}.

\section{Existence, Uniqueness, and Equimarginal Characterization}\label{section_existence_equimarginal}

In this section, we show that our game admits a unique Nash equilibrium and provide its \textit{equimarginal} characterization. 
Existence and uniqueness follow from Rosen's classical theory of concave games~\cite[Theorems~1 and~2, p.~524]{Rosen} once diagonal strict concavity is verified. The purpose of this section is therefore to check Rosen's hypotheses and record the equilibrium conditions used throughout the paper.

Following Rosen, we introduce the pseudo-gradient of the game. It is useful first to visualize its coordinates as the marginal-utility matrix
\[
    \begin{pmatrix}
        c_1^1(x) & \cdots & c_n^1(x)\\
        \vdots   & \ddots & \vdots\\
        c_1^m(x) & \cdots & c_n^m(x)
    \end{pmatrix},
\]
whose $(j,i)$-entry is the marginal utility $c_i^j(x)$ from \eqref{marginal_utility_formula}. Formally, however, all inner products and norms below are taken after stacking the rows of this matrix into a vector. Thus, we define
\begin{equation}\label{def_Phi}
    \Phi(x)
    :=
    \bigl(
        \nabla_{x^1}F^1(x),\ldots,
        \nabla_{x^m}F^m(x)
    \bigr)
    \in\R^{mn}.
\end{equation}
Equivalently, the coordinate of $\Phi(x)$ corresponding to player $j$
and project $i$ is
\(
    \Phi_i^j(x)=c_i^j(x).
\)
We record the needed verification in the next lemma. The proof is a direct calculation and is given in Appendix~\ref{appendix_lem_concavity_antimonotonicity}. The quantitative bound \eqref{eq_strong_antimonotonicity} will be a crucial ingredient in the proof of the algorithm convergence in Section \ref{section_algorithms} below.

\begin{Lem}[Concavity and diagonal strict concavity]\label{lem_concavity_antimonotonicity}
For every player $j$ and every fixed profile $x^{-j}$ of the other players, the map
\(
    x^j\mapsto F^j(x^j,x^{-j})
\)
is strictly concave on $X^j$. Moreover, letting
\begin{equation}\label{def_R_mu}
    R:=\sum_{j=1}^m r^j
    \qquad
    \text{ and }
    \qquad
    \mu:=\min_{1\le i\le n}
    \frac{a_i b_i}{(b_i+R)^3}>0,
\end{equation}
the pseudo-gradient of the game $\Phi$ satisfies
\begin{equation}\label{eq_strong_antimonotonicity}
    (x-y)\cdot\bigl(\Phi(x)-\Phi(y)\bigr)
    \le -\mu\lVert x-y\rVert^2
    \qquad \text{for all }x,y\in X.
\end{equation}
Here and below, the inner product and norm are Euclidean on $\R^{mn}$. In particular,
\[
    (x-y)\cdot\bigl(\Phi(x)-\Phi(y)\bigr)
    =
    \sum_{j=1}^m\sum_{i=1}^n
    (x_i^j-y_i^j)
    \bigl(c_i^j(x)-c_i^j(y)\bigr),
\]
and
\(
    \lVert x-y\rVert^2
    =
    \sum_{j=1}^m\sum_{i=1}^n
    (x_i^j-y_i^j)^2.
\)
Thus, the inequality in \eqref{eq_strong_antimonotonicity} is strict whenever $x\ne y$; hence, the pseudo-gradient is strictly anti-monotone, and the game is diagonally strictly concave in the sense of Rosen, with identity weights.
\end{Lem}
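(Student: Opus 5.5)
The plan is to exploit the projectwise separability of the payoffs: both claims reduce to a computation on a single project.

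\emph{Strict concavity.} Fix $x^{-j}$ and write $d_i:=b_i+u_i^j>0$. Then
\[
F^j(x^j,x^{-j})=\sum_i \frac{a_i x_i^j}{d_i+x_i^j}.
\]
This is a sum of one-variable functions, one for each coordinate. Each summand has second derivative $-2a_id_i/(d_i+x_i^j)^3<0$ on $x_i^j\ge0$. Hence the Hessian is diagonal with strictly negative entries, and the map $x^j\mapsto F^j(x^j,x^{-j})$ is strictly concave on $\R^n_+\supset X^j$.

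\emph{Anti-monotonicity: reduction to one project.} The coordinates $c_i^j(x)$ depend only on the column $x_i=(x_i^1,\ldots,x_i^m)$. So $\Phi$ splits into $n$ independent blocks $G_i:\R^m_+\to\R^m$, with $G_i(x_i)_j=a_i(L_i-x_i^j)/L_i^2$. It therefore suffices to prove, for each $i$,
\[
(x_i-y_i)\cdot(G_i(x_i)-G_i(y_i))\le -\mu_i\lVert x_i-y_i\rVert^2,
\qquad \mu_i:=\frac{a_ib_i}{(b_i+R)^3},
\]
for columns of admissible profiles. Summing over $i$ and using $\mu\le\mu_i$ then gives the claim. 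Since $X$ is convex, for $h=x_i-y_i$ I will write
\[
h\cdot(G_i(x_i)-G_i(y_i))=\int_0^1 h^\top DG_i(y_i+th)\,h\,dt,
\]
which reduces the problem to a pointwise bound on the quadratic form of the Jacobian along the segment. Along that segment all entries are nonnegative and $\sum_j z^j\le R$, so $L_i\le b_i+R$.

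\emph{Jacobian bound (the main step).} Drop the index $i$ and differentiate. This gives $\partial c^j/\partial x^k=\frac{a}{L^3}\bigl(-L\delta_{jk}-L+2x^j\bigr)$, so with $s:=\sum_j h_j$ and $S:=\sum_j x^j=L-b$,
\[
h^\top DG\,h=\frac{a}{L^3}\Bigl(-L\lVert h\rVert^2-Ls^2+2(x\cdot h)s\Bigr).
\]
The only indefinite term is the cross term $2(x\cdot h)s$. I will absorb it with Young's inequality, $2(x\cdot h)s\le (x\cdot h)^2/L+Ls^2$, which exactly cancels the $-Ls^2$ term. Then Cauchy--Schwarz with weights $x^j\ge0$ gives
\[
(x\cdot h)^2\le S\sum_j x^jh_j^2\le S^2\lVert h\rVert^2.
\]
Combining these,
\[
h^\top DG\,h\le -\frac{a}{L^4}(L^2-S^2)\lVert h\rVert^2=-\frac{a\,b\,(L+S)}{L^4}\lVert h\rVert^2\le -\frac{ab}{L^3}\lVert h\rVert^2\le-\frac{ab}{(b+R)^3}\lVert h\rVert^2,
\]
using $L-S=b$, $L+S\ge L$ and $L\le b+R$. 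Integrating over $t$ and summing over projects yields \eqref{eq_strong_antimonotonicity}. Strictness for $x\ne y$ and Rosen's diagonal strict concavity with identity weights follow immediately, since $\mu>0$.
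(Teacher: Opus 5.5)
Your proposal is correct and follows essentially the same route as the paper. Both arguments reduce to a single project, integrate along the segment $y+t(x-y)$, and use the identical quadratic form $\frac{a_i}{L_i^3}\bigl[-L_i(\Delta_i^2+\lVert d_i\rVert^2)+2\Delta_i\sum_j x_i^jd_i^j\bigr]$. The cross term is then absorbed using $x_i^j\ge0$ and $\sum_j x_i^j=L_i-b_i$. The only difference is the elementary inequality at that step: the paper weights $2\Delta_i d_i^j\le\Delta_i^2+(d_i^j)^2$ by $x_i^j$, whereas you cancel the $s^2$ term with Young's inequality and then apply a weighted Cauchy--Schwarz. Both routes arrive at the same constant $a_ib_i/(b_i+R)^3$.
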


The following proposition is the formal statement of the existence and uniqueness assertion of part~\ref{thm_Equimarginal} of Theorem~\ref{thm_main_structure}. It also provides the main practical form of the equilibrium conditions: at equilibrium, each player equalizes marginal utility over all projects she actually uses, while every unused project has marginal utility no larger than this common value.

\begin{Prop}[Existence, uniqueness, and equimarginal characterization of equilibrium]\label{thm_existence_uniqueness}
The game $M(n,m,a,b,r)$ admits a unique Nash equilibrium. Moreover, if $x\in X$ is an admissible profile, then $x$ is the Nash equilibrium if and only if for every player $j$ there exists a number $c^j>0$ such that
\begin{equation}\label{equimarginal_characterization}
    c_i^j(x)=c^j \qquad \text{for all } \ \ i\in A^j(x),
    \qquad
    \text{and}
    \qquad 
    c_i^j(x)\le c^j \qquad \text{for all } \ \ i\notin A^j(x).
\end{equation}
Equivalently, for every player $j$,
\(
    c_i^j(x)\le c^j,
\) and 
\(
    x_i^j\bigl(c^j-c_i^j(x)\bigr)=0,
    \, i=1,\ldots,n.
\)
\end{Prop}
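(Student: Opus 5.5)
We will prove existence and uniqueness by appealing to Rosen's theorems, and the equimarginal characterization by the KKT conditions for each player's best-response problem.

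\textbf{Existence and uniqueness.} Each strategy set $X^j$ is a nonempty compact convex simplex, so $X$ is compact and convex. Each payoff $F^j$ is continuous on $X$, since every denominator $L_i\ge b_i>0$. By Lemma~\ref{lem_concavity_antimonotonicity}, $F^j$ is concave in $x^j$ for fixed $x^{-j}$. Hence Rosen's Theorem~1 gives existence of a Nash equilibrium. Lemma~\ref{lem_concavity_antimonotonicity} also shows that the pseudo-gradient $\Phi$ with identity weights is strictly anti-monotone, which is diagonal strict concavity. Rosen's Theorem~2 then gives uniqueness. Rosen's setting uses a coupled constraint set; ours is a product of individual constraint sets, which is a special case. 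Rosen also requires differentiability on a neighbourhood of $X$. Each $F^j$ is smooth on the open set where all $L_i>0$, and this set contains $X$.

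Uniqueness can also be checked directly, bypassing Rosen. Let $x,y$ be two equilibria. The variational inequalities below give $(y-x)\cdot\Phi(x)\le0$ and $(x-y)\cdot\Phi(y)\le 0$. Adding them gives $(x-y)\cdot(\Phi(x)-\Phi(y))\ge0$. By \eqref{eq_strong_antimonotonicity} this forces $x=y$.

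\textbf{Equimarginal characterization.} Fix $j$ and $x^{-j}$. By Lemma~\ref{lem_concavity_antimonotonicity}, the best-response problem maximizes a concave $C^1$ function over the simplex $X^j$. The constraints are linear, so no constraint qualification is needed, and the KKT conditions are necessary and sufficient for optimality. The Lagrangian is $F^j+\lambda(r^j-\sum_i x_i^j)+\sum_i\nu_i x_i^j$ with $\nu_i\ge0$. Stationarity reads $c_i^j(x)=\lambda-\nu_i$, with complementary slackness $\nu_i x_i^j=0$. Setting $c^j:=\lambda$ gives exactly \eqref{equimarginal_characterization}: $c_i^j(x)=c^j$ where $x_i^j>0$ and $c_i^j(x)\le c^j$ otherwise. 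This is equivalent to the complementarity form $x_i^j(c^j-c_i^j(x))=0$ stated in the proposition.

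For the converse, the same conditions give the variational inequality $(y^j-x^j)\cdot\nabla_{x^j}F^j(x)\le 0$ for all $y^j\in X^j$. Indeed, $\sum_i (y_i^j-x_i^j)c_i^j\le c^j\sum_i(y_i^j-x_i^j)=0$, using $c_i^j\le c^j$, $y_i^j\ge0$, equality of $c_i^j$ and $c^j$ wherever $x_i^j>0$, and equal sums. By concavity, this variational inequality implies $x^j$ is a best response. Since this holds for every $j$, $x$ is a Nash equilibrium if and only if the conditions hold for all players.

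\textbf{Positivity of $c^j$.} Because $r^j>0$, some $i$ has $x_i^j>0$. Then $c^j=c_i^j(x)=a_i(b_i+u_i^j)/L_i^2\ge a_ib_i/L_i^2>0$. This positivity check is the only model-specific step. The real content of the proposition is the diagonal strict concavity estimate, already established in Lemma~\ref{lem_concavity_antimonotonicity}, so I expect no serious obstacle beyond carefully matching Rosen's hypotheses.
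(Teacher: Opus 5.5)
Your proposal is correct and follows the paper's proof essentially step for step. For existence and uniqueness you use Rosen's Theorems~1 and~2, backed by the same direct uniqueness argument the paper gives in its appendix (summed first-order conditions plus strict anti-monotonicity); for the equimarginal characterization you use the first-order conditions of each player's concave best-response problem, with the identical summation argument for sufficiency. The only differences are minor: you derive necessity from Lagrange multipliers (valid because the constraints are linear) where the paper uses a small resource-shifting argument, and you explicitly verify $c^j\ge a_ib_i/L_i^2>0$, which the paper leaves implicit.
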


\begin{proof}
We start by proving the existence and uniqueness of Nash equilibrium. By Lemma~\ref{lem_concavity_antimonotonicity}, each payoff is continuous and concave in the player's own strategy, and the pseudo-gradient satisfies Rosen's diagonal strict concavity condition. Therefore, existence and uniqueness follow directly from Rosen~\cite[Theorems~1 and~2, p.~524]{Rosen}. For completeness, we also spell out the short argument in Appendix \ref{appendix_existence_uniqueness}.

It remains to prove the characterization \eqref{equimarginal_characterization}. Fix a player $j$ and fix the strategies $x^{-j}$ of all other players. Consider the problem
\[
    \max_{y^j\in X^j} F^j(y^j,x^{-j}).
\]
Since $F^j(\cdot,x^{-j})$ is concave, a vector $x^j\in X^j$ solves this problem if and only if
\(
    \nabla_{x^j}F^j(x)\cdot (y^j-x^j)\le0
\)
for all $y^j\in X^j$.

We now show that this first-order condition is equivalent to the stated marginal-utility conditions.

Assume first that $x^j$ is optimal. Since $r^j>0$, the active set $A^j(x)$ is nonempty. If two active projects $i,k\in A^j(x)$ had different marginal utilities, say $c_i^j(x)>c_k^j(x)$, then moving a sufficiently small amount of resource from project $k$ to project $i$ would increase the first-order variation of $F^j$, contradicting optimality. Hence, all active projects have a common marginal utility; denote it by $c^j$. If some inactive project $i\notin A^j(x)$ had $c_i^j(x)>c^j$, then moving a sufficiently small amount of resource from any active project to project $i$ would again increase the first-order variation, a contradiction. Therefore, $c_i^j(x)\le c^j$ for every inactive $i$.

Conversely, suppose that these marginal-utility conditions hold. Let $y^j\in X^j$ be arbitrary. Write $A^j=A^j(x)$. Since $x_i^j=0$ for $i\notin A^j$, we have
\[
\begin{aligned}
    \nabla_{x^j}F^j(x)\cdot (y^j-x^j)
    &=\sum_{i\in A^j} c^j(y_i^j-x_i^j)
      +\sum_{i\notin A^j} c_i^j(x)y_i^j  \\
    &\le c^j\sum_{i\in A^j}(y_i^j-x_i^j)
      +c^j\sum_{i\notin A^j}y_i^j  =c^j\left(\sum_{i=1}^n y_i^j-\sum_{i=1}^n x_i^j\right)=0.
\end{aligned}
\]
By concavity, this implies
\(
    F^j(y^j,x^{-j})\le F^j(x^j,x^{-j}).
\)
Thus, $x^j$ is a best response to $x^{-j}$. Since the argument applies to every player $j$, the profile $x$ is a Nash equilibrium. As has already been proved, this equilibrium is unique.
\end{proof}

\begin{Cor}[Variational-inequality and projection characterization]\label{cor_vi_projection}
An admissible profile $x\in X$ is the Nash equilibrium if and only if
\begin{equation}\label{eq_vi_characterization}
    \Phi(x)\cdot(y-x)\le0
    \qquad \text{for every }y\in X.
\end{equation}
Let $\Pi_X:\mathbb R^{mn}\to X$ denote the Euclidean projection onto $X$. Then, for every $\nu>0$, condition \eqref{eq_vi_characterization} is equivalent to
\begin{equation}\label{eq_projection_fixed_point_characterization}
    x=\Pi_X\bigl(x+\nu\Phi(x)\bigr).
\end{equation}
Thus, the Nash equilibrium is the unique fixed point of the projected pseudo-gradient map for every positive step size.
\end{Cor}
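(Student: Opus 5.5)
The plan has two parts: first reduce the Nash property to the variational inequality \eqref{eq_vi_characterization}, then use the standard projection characterization to get \eqref{eq_projection_fixed_point_characterization}.

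\emph{Step 1: Nash equilibrium $\iff$ \eqref{eq_vi_characterization}.} Because $X=X^1\times\cdots\times X^m$ is a product set, the inner product splits by player:
\[
    \Phi(x)\cdot(y-x)=\sum_{j=1}^m\nabla_{x^j}F^j(x)\cdot(y^j-x^j).
\]
If $x$ is a Nash equilibrium, then each $x^j$ maximizes the concave function $F^j(\cdot,x^{-j})$ over $X^j$. The first-order condition from the proof of Proposition~\ref{thm_existence_uniqueness} then makes every summand nonpositive, so the sum is nonpositive. Conversely, suppose \eqref{eq_vi_characterization} holds. Fix $j$ and take $y$ with $y^\ell=x^\ell$ for all $\ell\ne j$ and $y^j\in X^j$ arbitrary. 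Only the $j$-th summand survives, giving $\nabla_{x^j}F^j(x)\cdot(y^j-x^j)\le0$ for all $y^j\in X^j$. By concavity (Lemma~\ref{lem_concavity_antimonotonicity}), this is exactly the optimality of $x^j$ against $x^{-j}$. Applying this to every $j$ shows $x$ is a Nash equilibrium.

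\emph{Step 2: \eqref{eq_vi_characterization} $\iff$ \eqref{eq_projection_fixed_point_characterization}.} The set $X$ is nonempty, closed and convex, so the Euclidean projection is characterized by
\[
    p=\Pi_X(z)\iff p\in X\ \text{and}\ (z-p)\cdot(y-p)\le0\ \text{for all } y\in X.
\]
Set $z=x+\nu\Phi(x)$ and $p=x$. Then $x=\Pi_X(z)$ holds if and only if $\nu\,\Phi(x)\cdot(y-x)\le0$ for all $y\in X$. Since $\nu>0$, this is the same as \eqref{eq_vi_characterization}. The argument does not depend on the particular value of $\nu$.

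\emph{Step 3: uniqueness of the fixed point.} By Steps 1 and 2, for every $\nu>0$ the fixed points of $x\mapsto\Pi_X(x+\nu\Phi(x))$ in $X$ are exactly the Nash equilibria. Proposition~\ref{thm_existence_uniqueness} gives exactly one equilibrium, so the fixed point exists and is unique. Independently, uniqueness of a solution to the variational inequality also follows from \eqref{eq_strong_antimonotonicity}. If $x$ and $y$ both satisfy \eqref{eq_vi_characterization}, test each inequality with the other point and add the two:
\[
    (x-y)\cdot\bigl(\Phi(x)-\Phi(y)\bigr)\ge0.
\]
Combined with \eqref{eq_strong_antimonotonicity}, this forces $x=y$. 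No step is a real obstacle; the only care needed is the product decomposition in Step 1, where the test vector must vary only one player's block.
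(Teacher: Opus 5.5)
Your proposal is correct and follows the paper's proof: you split $\Phi(x)\cdot(y-x)$ by player and use the concave first-order condition for each block, testing with a $y$ that differs from $x$ in only one player's block for the converse. You then apply the obtuse-angle characterization of $\Pi_X$ with $z=x+\nu\Phi(x)$ and $p=x$, and your Step~3 simply makes explicit the uniqueness that the paper draws from Proposition~\ref{thm_existence_uniqueness}; your alternative argument via \eqref{eq_strong_antimonotonicity} is the one given in Appendix~\ref{appendix_existence_uniqueness}.
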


\begin{proof}
By concavity in each player's own strategy, $x$ is a Nash equilibrium if and only if, for every player $j$,
\(
    \nabla_{x^j}F^j(x)\cdot(y^j-x^j)\le0
\)
for every $y^j\in X^j$. Summing these inequalities over the players gives \eqref{eq_vi_characterization}. Conversely, by choosing $y\in X$ that differs from $x$ in only one player's coordinates, \eqref{eq_vi_characterization} recovers the individual first-order condition for that player, and hence the Nash property.

For a nonempty closed convex set $X$, the Euclidean projection is characterized by
\[
    p=\Pi_X(z)
    \quad\Longleftrightarrow\quad
    (z-p)\cdot(y-p)\le0
    \qquad \text{for every }y\in X.
\]
Taking $z=x+\nu\Phi(x)$ and $p=x$ proves the equivalence of \eqref{eq_vi_characterization} and \eqref{eq_projection_fixed_point_characterization}.
\end{proof}

\begin{Cor}[Active-project formula and boundary condition]\label{cor_active_formula_boundary}
Let $x^\ast$ be the Nash equilibrium. For every player $j$ and every project $i$ active for this player,
\begin{equation}\label{equilibrium_x_formula}
    x_i^{j,\ast}
    =L_i(x^\ast)-\frac{c^jL_i(x^\ast)^2}{a_i}
    =L_i(x^\ast)\left(1-\frac{c^jL_i(x^\ast)}{a_i}\right).
\end{equation}
Consequently,
\(
x_i^{j,\ast}>0
\)
implies 
\(
c^j<{a_i}/{L_i(x^\ast)}.
\)
For every project $i$ inactive for player $j$,
\[
    c^j\ge c_i^j(x^\ast)=\frac{a_i}{L_i(x^\ast)}.
\]
\end{Cor}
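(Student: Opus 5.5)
This corollary is a direct algebraic consequence of the equimarginal characterization in Proposition~\ref{thm_existence_uniqueness} combined with the explicit marginal-utility formula \eqref{marginal_utility_formula}. The argument splits into active and inactive projects. Throughout, write $L_i=L_i(x^\ast)$ and recall that $L_i\ge b_i>0$.

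\textbf{Active projects.} Fix a player $j$ and a project $i$ with $x_i^{j,\ast}>0$. By Proposition~\ref{thm_existence_uniqueness}, $c_i^j(x^\ast)=c^j$. By \eqref{marginal_utility_formula},
\[
    c_i^j(x^\ast)=\frac{a_i\bigl(L_i-x_i^{j,\ast}\bigr)}{L_i^2}.
\]
Setting this equal to $c^j$ and solving the linear equation for $x_i^{j,\ast}$ gives
\[
    x_i^{j,\ast}=L_i-\frac{c^jL_i^2}{a_i},
\]
which is \eqref{equilibrium_x_formula}; factoring out $L_i$ gives the second form. Since $L_i>0$, the condition $x_i^{j,\ast}>0$ is equivalent to $1-c^jL_i/a_i>0$, that is, to $c^j<a_i/L_i$.

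\textbf{Inactive projects.} Fix $i\notin A^j(x^\ast)$, so $x_i^{j,\ast}=0$. Then \eqref{marginal_utility_formula} reduces to $c_i^j(x^\ast)=a_iL_i/L_i^2=a_i/L_i$. The inequality $c^j\ge c_i^j(x^\ast)$ is exactly the second condition in \eqref{equimarginal_characterization}.

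No step here is a real obstacle. The only care needed is to use $L_i>0$ when dividing and when converting the sign of $x_i^{j,\ast}$ into the strict inequality on $c^j$, and to note that $L_i$ in the formula is the full equilibrium load, which includes $x_i^{j,\ast}$ itself. The formula is therefore an implicit relation, not an explicit solution.
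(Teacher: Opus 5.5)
Your proposal is correct and follows essentially the same route as the paper. On active projects you set the marginal-utility formula equal to $c^j$ using Proposition~\ref{thm_existence_uniqueness}, solve for $x_i^{j,\ast}$, and read off $c^j<a_i/L_i(x^\ast)$ from positivity; on inactive projects you note that $x_i^{j,\ast}=0$ reduces $c_i^j(x^\ast)$ to $a_i/L_i(x^\ast)$ and apply the boundary inequality from that proposition.
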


\begin{proof}
For an active project, Proposition~\ref{thm_existence_uniqueness} and \eqref{marginal_utility_formula} give $c^j=a_i(L_i(x^\ast)-x_i^{j,\ast})/L_i(x^\ast)^2$. Solving for $x_i^{j,\ast}$ yields \eqref{equilibrium_x_formula}, whose positivity implies $c^j<a_i/L_i(x^\ast)$. If the project is inactive, then $c_i^j(x^\ast)=a_i/L_i(x^\ast)$, and the boundary inequality follows from Proposition~\ref{thm_existence_uniqueness}.
\end{proof}

This characterization is deliberately stated without using cutoff or zone notation. In the next section, we solve the fundamental one-zone case, where all players are known a priori to be active on the same set of projects. The general cutoff and multi-zone structure is developed afterward.

\section{Local Structure: The Fully Active Regime}\label{section_local_structure}

We now turn from the existence and characterization of equilibrium to the structure of the equilibrium itself. The general equilibrium may have several active zones: some players may invest in some projects but not in others. The fundamental building block, however, is the case of a single active zone, where all players allocate positive investments in the same set of projects; this case is also the local model from which the general multi-zone structure will be assembled. Moreover, the formulas obtained here will later be applied zone by zone and form the algebraic basis of Block Pandora; they also provide exact local benchmarks for the projected algorithm.

Nothing in the calculations below requires the common active set to contain every project. If all players are active on the same fixed subset $I\subseteq\{1,\ldots,n\}$, with $|I|=K$, then the analysis applies verbatim to the restricted $K$-project game; the omitted projects are subsequently checked through the boundary inequalities of Corollary~\ref{cor_active_formula_boundary}. To avoid carrying the additional index set $I$, we take $I=\{1,\ldots,n\}$ throughout and refer to this regime as fully active.

Accordingly, throughout this section, we suppress the superscript $\ast$ from equilibrium quantities and assume that the unique Nash equilibrium \(x = x^\ast\) of \(M(n,m,a,b,r)\) is \textit{fully active}, that is,
\begin{equation}\label{eq_one_zone_standing_assumption}
    x_i^j>0
    \qquad
    \text{for all }\quad i=1,\ldots,n,\quad j=1,\ldots,m.
\end{equation}
By Proposition~\ref{thm_existence_uniqueness}, each player $j$ then has a common equilibrium marginal utility rate $c^j>0$ on all projects. We define the aggregate marginal utility rate, total resource, and aggregate baseline, respectively, by
\[
    C:=\sum_{j=1}^m c^j,
    \qquad
    R:=\sum_{j=1}^m r^j,
    \qquad
    B:=\sum_{i=1}^n b_i.
\]
Our goal is to determine \(C\) from a scalar nonlinear equation involving only \(R\), \(B\), and the project parameters; once \(C\) is known, all individual rates \(c^j\), loads \(L_i\), allocations \(x_i^j\), and payoffs are explicit.

We start by recalling \eqref{equilibrium_x_formula} of Corollary~\ref{cor_active_formula_boundary}, which, in the fully active regime, gives
\(
    x_i^j=L_i-{c^jL_i^2}/{a_i}.
\)
Summing over the players and using $\sum_j x_i^j=L_i-b_i$ yields
\begin{equation}\label{eq_one_zone_load_quadratic}
    \frac{C}{a_i}L_i^2-(m-1)L_i-b_i=0.
\end{equation}
Hence, $L_i$ is the positive root
\begin{equation}\label{Li_solution}
    L_i=G_i(C)
    :=\frac{a_i}{2C}
    \left((m-1)+\sqrt{(m-1)^2+\frac{4b_iC}{a_i}}\right),
    \qquad i=1,\ldots,n.
\end{equation}
Since $\sum_iL_i=R+B$, we obtain the \textbf{basic nonlinear equation}
\begin{equation}\label{eq_basic_nonlinear_equation}
    R+B=G(C):=\sum_{i=1}^nG_i(C)=\sum_{i=1}^n\frac{a_i}{2C}\left((m-1)+\sqrt{(m-1)^2+\frac{4b_iC}{a_i}}\right).
\end{equation}
Its solvability properties are recorded next and proved in Appendix~\ref{appendix_lem_basic_nonlinear_equation}.

\begin{Lem}[Solvability of the basic nonlinear equation]\label{lem_basic_nonlinear_equation}
For fixed positive parameter vectors $a,b$ and every $R>0$, equation \eqref{eq_basic_nonlinear_equation} has a unique solution $C>0$. Moreover, for fixed $a,b,n,m$, this solution is a strictly decreasing function of $R$.
\end{Lem}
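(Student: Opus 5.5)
The plan is to show that each summand $G_i$ is a continuous, strictly decreasing map of $(0,\infty)$ onto $(0,\infty)$. Then $G=\sum_i G_i$ inherits these properties, and the intermediate value theorem together with strict monotonicity gives a unique $C>0$ with $G(C)=R+B$ for every $R>0$ (indeed for every positive right-hand side).

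For monotonicity, I would not differentiate the explicit formula \eqref{Li_solution} directly. Instead I would use the quadratic \eqref{eq_one_zone_load_quadratic}: $G_i(C)$ is the unique positive root $L$ of
\[
    \frac{C}{a_i}L^2-(m-1)L-b_i=0 .
\]
Dividing by $L>0$, this is equivalent to
\[
    \frac{C}{a_i}=\frac{m-1}{L}+\frac{b_i}{L^2}=:h_i(L).
\]
The function $h_i$ is continuous and strictly decreasing on $(0,\infty)$, and it maps this interval onto $(0,\infty)$, since $b_i>0$ forces $h_i(L)\to\infty$ as $L\to0^+$ and $h_i(L)\to0$ as $L\to\infty$. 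Hence $G_i(C)=h_i^{-1}(C/a_i)$ is a continuous strictly decreasing bijection of $(0,\infty)$ onto itself. This argument covers $m=1$ as well.

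Alternatively, one can check the limits directly from the explicit formula. As $C\to0^+$, the term $a_i(m-1)/C$ blows up when $m\ge2$, and when $m=1$ we have $G_i(C)=\sqrt{a_ib_i/C}\to\infty$. As $C\to\infty$, rationalizing gives
\[
    G_i(C)=\frac{2b_i}{\sqrt{(m-1)^2+4b_iC/a_i}-(m-1)}\to0 .
\]
Consequently $G$ decreases strictly from $+\infty$ to $0$, and the equation $G(C)=R+B$ has exactly one solution $C(R)>0$.

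For the dependence on $R$, write $C(R)=G^{-1}(R+B)$. Here $G^{-1}:(0,\infty)\to(0,\infty)$ is strictly decreasing as the inverse of a strictly decreasing bijection, and $R\mapsto R+B$ is strictly increasing. The composition is therefore strictly decreasing in $R$. If a differentiable statement is wanted later, the implicit function theorem applies, because $G'(C)<0$: differentiating $h_i(G_i(C))=C/a_i$ gives $G_i'(C)=1/\bigl(a_i h_i'(G_i(C))\bigr)<0$.

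The main point needing care is the range claim, specifically the blow-up at $0$. This is where $b_i>0$ is used, or $m\ge2$ would suffice. Working with the inverse function $h_i$ handles it cleanly and uniformly in $m$. The monotonicity itself is straightforward.
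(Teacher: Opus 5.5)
Your proof is correct and has the same structure as the paper's: $G$ is continuous and strictly decreasing from $+\infty$ to $0$ on $(0,\infty)$, so $G(C)=R+B$ has a unique root, and that root decreases as $R$ increases. The one difference is how you show $G_i$ is decreasing. The paper differentiates the load quadratic implicitly and checks that the coefficient $(m-1)+2b_i/G_i(C)$, which in your notation is exactly $-G_i(C)^2h_i'(G_i(C))$, is positive; your reading of the same quadratic as $G_i(C)=h_i^{-1}(C/a_i)$ avoids derivatives and gives the two limits of $G_i$ without a separate computation.
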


Before we proceed to the main result of this section, we highlight that equation~\eqref{eq_basic_nonlinear_equation} relates the cumulative resource $R$ and the cumulative marginal utility rate $C$. It does not involve the individual resources $r^1,\ldots,r^m$ or the individual marginal rates $c^1,\ldots,c^m$. Consequently, as long as the equilibrium remains fully active, redistributing a fixed total resource $R$ among the players leaves $C$ unchanged; the individual marginal rates and allocations are recovered only afterward. We find this invariance striking from an intuitive and economic perspective. We are not aware of a closely analogous result or a compelling explanation for it, and regard understanding this phenomenon as an interesting question in its own right.

The following proposition is the formal statement of
part~\ref{thm_one_zone} of Theorem~\ref{thm_main_structure}.

\begin{Prop}[One-zone equilibrium formulas]\label{thm_one_zone_formulas}
Assume that the Nash equilibrium is fully active, in the sense of \eqref{eq_one_zone_standing_assumption}. Let \(c^1, \dots, c^m\) be players' equilibrium marginal utility rates, which exist by Proposition \ref{thm_existence_uniqueness}, and let
\(
    C=\sum_{j=1}^m c^j.
\)
Then \(C\) is the unique positive solution of the basic nonlinear equation \eqref{eq_basic_nonlinear_equation} with the given problem data $a, b, r$, and the equilibrium loads are given by
\begin{equation}\label{eq_one_zone_loads}
    L_i=G_i(C),
    \qquad i=1,\ldots,n,
\end{equation}
in the notation of \eqref{Li_solution}. 

Define
\begin{equation}\label{eq_one_zone_beta_definition}
    D:=(m-1)R+mB,
    \qquad
    \beta^j:=\frac{R+B-r^j}{D},
    \qquad j=1,\ldots,m.
\end{equation}
Then $\beta^j>0$, $\sum_j\beta^j=1$, and
\begin{equation}\label{eq_one_zone_cj_formula}
    c^j=\beta^jC,
    \qquad j=1,\ldots,m.
\end{equation}
The equilibrium allocations are
\begin{equation}\label{eq_one_zone_allocation_formula}
    x_i^j
    =L_i-\frac{c^jL_i^2}{a_i}
    =L_i-\beta^j\bigl((m-1)L_i+b_i\bigr),
    \qquad i=1,\ldots,n,\quad j=1,\ldots,m,
\end{equation}
and the project and total payoffs are
\begin{equation}\label{eq_one_zone_payoff_formula}
    f_i^j=a_i-c^jL_i,
    \quad
    F^j=\sum_{i=1}^na_i-c^j(R+B),
    \quad \text{ thus also } \quad
    \sum_{j=1}^mF^j
    =m\sum_{i=1}^na_i-C(R+B).
\end{equation}

Conversely, for arbitrary positive parameters $a,b,r$, let $C$ solve \eqref{eq_basic_nonlinear_equation} and define $L_i,D,\beta^j,c^j,x_i^j$ by \eqref{eq_one_zone_loads}--\eqref{eq_one_zone_allocation_formula}. If every $x_i^j$ is positive, then the resulting profile is the unique Nash equilibrium.
\end{Prop}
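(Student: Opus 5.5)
The plan is to derive everything from the equimarginal conditions of Proposition~\ref{thm_existence_uniqueness} and the active-project formula \eqref{equilibrium_x_formula} of Corollary~\ref{cor_active_formula_boundary}.

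\textbf{Forward direction.} In the fully active regime the quadratic \eqref{eq_one_zone_load_quadratic} has already been derived. Its product of roots is $-b_ia_i/C<0$, so it has exactly one positive root, namely $G_i(C)$. Since $L_i>0$, we get $L_i=G_i(C)$. Summing over $i$ and using $\sum_i L_i=R+B$ shows that $C$ solves \eqref{eq_basic_nonlinear_equation}, and Lemma~\ref{lem_basic_nonlinear_equation} gives uniqueness of the positive solution.

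\textbf{Individual rates.} The key identity I will use is $L_i^2/a_i=((m-1)L_i+b_i)/C$, which is just \eqref{eq_one_zone_load_quadratic} rearranged. Substituting it into $x_i^j=L_i-c^jL_i^2/a_i$ gives
\[
x_i^j=L_i-\frac{c^j}{C}\bigl((m-1)L_i+b_i\bigr).
\]
Summing over $i$ yields
\[
r^j=(R+B)-\frac{c^j}{C}\bigl((m-1)(R+B)+B\bigr).
\]
I then check that $(m-1)(R+B)+B=(m-1)R+mB=D$. Hence $c^j/C=(R+B-r^j)/D=\beta^j$, which gives both \eqref{eq_one_zone_cj_formula} and the second form of \eqref{eq_one_zone_allocation_formula}. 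The positivity $\beta^j>0$ holds because $r^j<R+B$. The normalization $\sum_j\beta^j=1$ follows from $\sum_j(R+B-r^j)=m(R+B)-R=D$.

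\textbf{Payoffs.} From the marginal-utility formula we have $c^jL_i^2=a_i(L_i-x_i^j)$. Therefore
\[
f_i^j=\frac{a_ix_i^j}{L_i}=a_i-\frac{a_i(L_i-x_i^j)}{L_i}=a_i-c^jL_i.
\]
Summing over $i$ with $\sum_i L_i=R+B$ gives $F^j$, and summing over $j$ gives the aggregate payoff in \eqref{eq_one_zone_payoff_formula}.

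\textbf{Converse.} Starting from the root $C$ of \eqref{eq_basic_nonlinear_equation}, I define the quantities as stated and verify three things.
\begin{enumerate}
\item Admissibility. The computation above run backwards gives $\sum_i x_i^j=(R+B)-\beta^jD=r^j$. Using $\sum_j\beta^j=1$ and the identity above, $\sum_j x_i^j=mL_i-(m-1)L_i-b_i=L_i-b_i$. So the $L_i$ are exactly the loads of the profile, and together with the assumed positivity the profile lies in $X$.
\item Equimarginality. Since the loads are consistent, \eqref{marginal_utility_formula} gives $c_i^j(x)=a_i(L_i-x_i^j)/L_i^2=c^j$ for every $i$, directly from the definition of $x_i^j$.
\item Conclusion. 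With all projects active and $c^j>0$, the conditions \eqref{equimarginal_characterization} hold, so Proposition~\ref{thm_existence_uniqueness} identifies the profile as the unique Nash equilibrium.
\end{enumerate}

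The main obstacle is the consistency check in the converse: the loads defined through $G_i(C)$ must coincide with the loads actually generated by the constructed allocations. This is where $\sum_j\beta^j=1$ and the quadratic identity must be combined carefully. Everything else is bookkeeping.
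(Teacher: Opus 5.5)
Your proposal is correct and follows essentially the same route as the paper: you identify $L_i=G_i(C)$ and the basic equation from the summed load quadratic, obtain $c^j=\beta^jC$ by summing the active-project formula over projects with the identity $C\sum_i L_i^2/a_i=D$, and prove the converse by checking the budgets, the consistency of the loads, and then invoking the equimarginal characterization. The only addition is your product-of-roots argument for why $L_i$ must be the positive root of the quadratic, which the paper leaves implicit.
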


\begin{proof}
The load formula and the basic nonlinear equation were derived above, and Lemma~\ref{lem_basic_nonlinear_equation} gives the uniqueness of $C$. From \eqref{eq_one_zone_load_quadratic} and $\sum_iL_i=R+B$, we obtain
\[
    C\sum_{i=1}^n\frac{L_i^2}{a_i}
    =\sum_{i=1}^n\bigl((m-1)L_i+b_i\bigr)
    =(m-1)(R+B)+B=D.
\]
Summing the active-project formula over $i$ therefore gives
\[
    r^j
    =R+B-c^j\sum_{i=1}^n\frac{L_i^2}{a_i}
    =R+B-c^j\frac{D}{C},
\]
which is equivalent to \eqref{eq_one_zone_cj_formula}. Moreover,
\(
    R+B-r^j=B+\sum_{\ell\ne j}r^\ell>0
\)
and 
\(
    \sum_{j=1}^m(R+B-r^j)=D,
\)
so the weights in \eqref{eq_one_zone_beta_definition} are positive and sum to one. Formula \eqref{eq_one_zone_allocation_formula} follows from
\[
    \frac{CL_i^2}{a_i}=(m-1)L_i+b_i,
    \quad 
    \text{ and }
    \quad 
    f_i^j=\frac{a_ix_i^j}{L_i}=a_i-c^jL_i.
\]
Summing first over projects and then over players proves \eqref{eq_one_zone_payoff_formula}.

For the converse, suppose that the quantities in the statement are defined and that all $x_i^j>0$. Using $\sum_iL_i=R+B$, $\sum_j\beta^j=1$, and the definition of $D$, we obtain
\[
\begin{aligned}
    \sum_{i=1}^n x_i^j
    =R+B-\beta^jD=r^j
    \quad \text{ and } \quad
    b_i+\sum_{j=1}^m x_i^j
    =b_i+mL_i-\bigl((m-1)L_i+b_i\bigr)
      \sum_{j=1}^m\beta^j
      =L_i.
\end{aligned}
\]
Thus, the profile is admissible and the prescribed $L_i$ are its actual loads. Finally,
\[
    \frac{a_i(L_i-x_i^j)}{L_i^2}
    =\beta^jC=c^j
    \qquad \text{for all }i,j.
\]
The equimarginal characterization in Proposition~\ref{thm_existence_uniqueness} now shows that the profile is a Nash equilibrium, and uniqueness follows from the same proposition.
\end{proof}

\begin{remark}[Checking the fully active candidate]
\label{rem_full_activity_check}
The positivity condition in the converse part of Proposition~\ref{thm_one_zone_formulas} can be checked using only the least favorable player--project pair. Indeed, put
\(
    p_i:=a_i / L_i.
\)
By \eqref{eq_one_zone_allocation_formula}, the one-zone candidate satisfies $x_i^j>0$ if and only if $c^j<p_i$. Moreover, \eqref{eq_one_zone_beta_definition}--\eqref{eq_one_zone_cj_formula} give
\(
    c^1\le\cdots\le c^m.
\)
Rewriting \eqref{eq_one_zone_load_quadratic} as
\(
    q_i
    =
    {p_i^2}/(C-(m-1)p_i)
\)
shows that $p_i$ is increasing in $q_i$; hence $p_1\ge\cdots\ge p_n$. Consequently, all entries of the one-zone candidate are positive if and only if
\(
    c^m< a_n / L_n.
\)
Thus, this single inequality determines whether the candidate of Proposition~\ref{thm_one_zone_formulas} is the fully active Nash equilibrium.
\end{remark}

\begin{Cor}[Local monotonicity between players]\label{cor_one_zone_local_monotonicity}
Assume that the one-zone equilibrium is fully active and that
\(
    r^1\ge r^2\ge\cdots\ge r^m.
\)
Then
\(
    c^1\le c^2\le\cdots\le c^m,
\)
and, for every project $i=1,\ldots,n$, we have
\(
x_i^1\ge x_i^2\ge\cdots\ge x_i^m
\)
and
\(
f_i^1\ge f_i^2\ge\cdots\ge f_i^m.
\)
Moreover,
\(
    F^1\ge F^2\ge\cdots\ge F^m.
\)
\end{Cor}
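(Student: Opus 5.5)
The plan is to read everything off the explicit one-zone formulas of Proposition~\ref{thm_one_zone_formulas}. First, since $\beta^j=(R+B-r^j)/D$ with $D>0$ and $C>0$, the ordering $r^1\ge\cdots\ge r^m$ immediately gives $\beta^1\le\cdots\le\beta^m$. Hence $c^j=\beta^jC$ yields $c^1\le\cdots\le c^m$.

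For the allocations, we will use \eqref{eq_one_zone_allocation_formula} in the form $x_i^j=L_i-\beta^j\bigl((m-1)L_i+b_i\bigr)$. The key point is that the load $L_i=G_i(C)$ is common to all players, because $C$ depends only on $R$ and the project data. Therefore, for a fixed project $i$, $x_i^j$ is an affine function of $\beta^j$ with negative slope $-\bigl((m-1)L_i+b_i\bigr)<0$. Combined with the ordering of the $\beta^j$, this gives $x_i^1\ge\cdots\ge x_i^m$.

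For the payoffs, we use $f_i^j=a_i-c^jL_i$ from \eqref{eq_one_zone_payoff_formula}. Since $L_i>0$ is player-independent, the ordering of $c^j$ reverses into $f_i^1\ge\cdots\ge f_i^m$. Alternatively, $f_i^j=a_ix_i^j/L_i$ is increasing in $x_i^j$ for fixed $L_i$, which gives the same conclusion. Summing over $i$, or directly from $F^j=\sum_ia_i-c^j(R+B)$, gives $F^1\ge\cdots\ge F^m$.

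There is no real obstacle here. The only point needing care is to stress that $L_i$ and $C$ are shared across players in the fully active regime, so each comparison is between functions of a single player-dependent scalar $\beta^j$. Strict inequalities hold whenever $r^j>r^{j+1}$, but only weak ones are claimed.
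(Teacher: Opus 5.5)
Your proposal is correct and follows essentially the same route as the paper: the ordering $\beta^1\le\cdots\le\beta^m$ gives the ordering of $c^j=\beta^jC$, and the one-zone formulas \eqref{eq_one_zone_allocation_formula}--\eqref{eq_one_zone_payoff_formula} make $x_i^j$, $f_i^j$, $F^j$ decreasing in $c^j$, since $L_i$ and $C$ are common to all players. Your added observation that the inequalities become strict when $r^j>r^{j+1}$ is also correct.
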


\begin{proof}
By \eqref{eq_one_zone_beta_definition}--\eqref{eq_one_zone_cj_formula}, $c^j$ decreases as $r^j$ increases. The formulas \eqref{eq_one_zone_allocation_formula} and \eqref{eq_one_zone_payoff_formula} then show that $x_i^j$, $f_i^j$, and $F^j$ all decrease as $c^j$ increases.
\end{proof}

\begin{remark}[The case $m=1$]\label{rem_one_zone_m_one}
The formulas above include the one-player marginal utility model.  If $m=1$, then $C=c^1$, $\beta^1=1$, and
\[
    G_i(C)=\sqrt{\frac{a_i b_i}{C}}.
\]
The basic nonlinear equation becomes
\[
    R+B=\frac{1}{\sqrt C}\sum_{i=1}^n\sqrt{a_i b_i},
    \qquad
    \sqrt C=\frac{\sum_{i=1}^n\sqrt{a_i b_i}}{R+B}.
\]
For $b_i\equiv1$ and $R=1$, this gives
$$
\sqrt{C}=\frac{\sum_{i=1}^{n}\sqrt{a_i}}{n+1},
$$
which is equivalent, in the fully active case, to the closed form allocation formula obtained by Malinovsky and Sonin ~\cite{SoninMalinovsky}.
\end{remark}

The next section explains how the same one-zone formulas are used when the full equilibrium is split into several activity zones.

\section{Global Qualitative Structure}\label{section_global_structure}

In the previous section we analyzed the fundamental case in which the Nash equilibrium is fully active: every player invests in every project. In the general case this need not be true. Some players may stop investing before others, and some projects may be used only by the players with larger resources.

We do not have a closed analytic formula that determines the global active structure directly from the parameters of the model. Accordingly, the treatment of the general case has complementary qualitative and computational parts. The present section shows that the active sets have a rigid ordered structure. Section~\ref{section_algorithms} then develops one unconditional algorithm and one structure-exploiting reconstruction algorithm. The projected marginal-utility method converges globally without requiring the active structure to be known in advance, while Block Pandora reconstructs and checks the equilibrium conditional on a proposed nested cutoff pattern. The latter is particularly natural when such a pattern is known or strongly suggested by the parameters, or when only a small number of player layers is expected.

We start by fixing the terminology. Suppose that a profile \(x\) has the property that, for every player \(j\), there exists an integer \(k^j\) such that
\[
    x_i^j>0 \quad \text{for } i\le k^j,
    \qquad
    x_i^j=0 \quad \text{for } i>k^j.
\]
We call \(k^j\) the \textit{cutoff} of player \(j\). If the cutoffs are nested,
\[
    n\ge k^1\ge k^2\ge \cdots \ge k^m\ge 1,
\]
then their distinct values define a partition of the active projects into consecutive \textit{zones}. More precisely, let
\[
    0=z_0<z_1<\cdots<z_N
\]
be the distinct positive values among the cutoffs \(k^1,\ldots,k^m\), written in increasing order. Define
\[
    Z_s:=\{z_{s-1}+1,\ldots,z_s\},
    \qquad s=1,\ldots,N.
\]
The set of players active in zone \(Z_s\) is denoted by
\[
    J_s:=\{j:k^j\ge z_s\},
    \quad \text{ thus } \qquad 
    J_1\supseteq J_2\supseteq\cdots\supseteq J_N.
\]
We also write
\[
    m_s:=|J_s|, \qquad B_s:=\sum_{i\in Z_s}b_i.
\]
If \(j\in J_s\), we denote the total amount invested by player \(j\) in zone \(Z_s\) and the total resource in the zone, respectively, by
\[
    r_s^j:=\sum_{i\in Z_s}x_i^j, \quad R_s:=\sum_{j\in J_s}r_s^j.
\]
It will also be useful later, in the algorithmic section, to use the dual language of layers and blocks. The players with the same cutoff form a \textit{layer}:
\[
    Y_s:=\{j:k^j=z_s\}, \qquad s=1,\ldots,N.
\]
A \textit{block} is the intersection of a player layer and a project zone. The block \(Y_t\times Z_s\) is active exactly when \(s\le t\).

The main result of this section is that the unique Nash equilibrium indeed has the ordered cutoff structure described above. Accordingly, the following proposition is the formal statement of part~\ref{thm_Ordered} of Theorem~\ref{thm_main_structure}.

\begin{Prop}[Cutoff and nested zone structure]\label{thm_cutoff_zone_structure}
Let \(x^\ast\) be the unique Nash equilibrium of the model \(M(n,m,a,b,r)\). Then there exist cutoffs
\(
    n\ge k^1\ge k^2\ge\cdots\ge k^m\ge 1
\)
such that, for every player \(j\),
\[
    x_i^{j,\ast}>0 \quad \text{for } i\le k^j,
    \qquad
    x_i^{j,\ast}=0 \quad \text{for } i>k^j.
\]
Equivalently, after the projects are ordered by their qualities \(q_1\ge q_2\ge\cdots\ge q_n\), each player invests only in an initial segment of the project list, and players with larger resources have weakly larger active segments.
\end{Prop}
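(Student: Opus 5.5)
The plan is to reduce both assertions to one scalar ordering: projects by the ``price'' $p_i:=a_i/L_i(x^\ast)$, and players by their equilibrium marginal rates $c^j$ from Proposition~\ref{thm_existence_uniqueness}. The first step is an exact description of each active set. By Corollary~\ref{cor_active_formula_boundary}, an active project satisfies $c^j<p_i$, and an inactive one satisfies $c^j\ge p_i$. Hence
\[
    A^j(x^\ast)=\{i:\ p_i>c^j\},
    \qquad
    x_i^{j,\ast}=L_i\bigl(1-c^j/p_i\bigr)^+ ,
\]
where the second formula is \eqref{equilibrium_x_formula} for active pairs and gives $0$ for inactive ones. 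Consequently it suffices to show two things: that $p_i$ is weakly decreasing in $i$ (initial segments), and that $c^j$ is weakly increasing in $j$ (nesting).

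For the projects, I sum the allocation formula over all players $j$ and use $\sum_j x_i^{j,\ast}=L_i-b_i$. Dividing by $L_i$ and noting that $b_i/L_i=p_i/q_i$ yields, for every project $i$,
\[
    h(p_i)+\frac{p_i}{q_i}=1,
    \qquad
    h(p):=\sum_{j=1}^m\bigl(1-c^j/p\bigr)^+ .
\]
Here $h$ depends only on the equilibrium rates $c^1,\dots,c^m$ and not on $i$. The function $h$ is continuous and nondecreasing in $p>0$, and $p\mapsto p/q$ is strictly increasing. So for each $q>0$ the equation $h(p)+p/q=1$ has at most one root $p=\pi(q)$. This map is nondecreasing in $q$: if $q'>q$ then $h(\pi(q))+\pi(q)/q'<1$, which forces $\pi(q')\ge\pi(q)$ by monotonicity. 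Therefore $q_1\ge\cdots\ge q_n$ implies $p_1\ge\cdots\ge p_n$, with equal $p$ for tied qualities. Each $A^j$ is then of the form $\{1,\dots,k^j\}$.

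For the players, suppose $r^j\ge r^l$ but $c^j>c^l$. By the formula $x_i^{j,\ast}=L_i(1-c^j/p_i)^+$, we get $x_i^{j,\ast}\le x_i^{l,\ast}$ for every $i$. The inequality is strict on any $i\in A^j$, since there $p_i>c^j>c^l$ and the map $c\mapsto 1-c/p_i$ is strictly decreasing. Also $A^j\neq\emptyset$ because $r^j>0$. Summing over $i$ gives $r^j<r^l$, a contradiction. Hence $r^1\ge\cdots\ge r^m$ gives $c^1\le\cdots\le c^m$. Then $k^j=\#\{i:p_i>c^j\}$ is weakly decreasing in $j$, and $k^m\ge1$ because $r^m>0$.

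The only delicate point is the project-ordering step. The difficulty is that the set of players active on a project depends on $p_i$ itself. I handle this by working with the positive-part function $h$, which absorbs the active/inactive distinction and makes every project obey the same scalar equation. Once that is in place, everything else is monotonicity bookkeeping.
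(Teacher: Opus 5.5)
Your proof is correct and follows essentially the same route as the paper. It uses the same effective levels $p_i=a_i/L_i$ and the same projectwise scalar equation $\sum_j(1-c^j/p)_+=1-p/q$ to order the $p_i$. It also nests the cutoffs by the same resource--rate monotonicity: your direct contradiction argument is just the strict decrease of the paper's $\Psi(c)=\sum_i L_i(1-c/p_i)_+$.

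One small difference: you use only uniqueness of the root, because each $p_i$ is already known to solve the equation, whereas the paper also proves existence of the root.
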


\begin{proof}
Let \(c^j\) be the common equilibrium marginal utility rate of player \(j\), whose
existence was proved in Proposition~\ref{thm_existence_uniqueness}. Throughout
this proof we write
\begin{equation}\label{p_i_def}
    L_i:=L_i(x^\ast),
    \qquad
    p_i:=\frac{a_i}{L_i}.
\end{equation}
We call \(p_i\) the effective marginal level of project \(i\) at equilibrium.

By Corollary~\ref{cor_active_formula_boundary}, for every player \(j\) and project \(i\),
\[
    x_i^{j,\ast}>0
    \quad \Longleftrightarrow \quad
    c^j<p_i, 
    \qquad \text{ and } \qquad
    x_i^{j,\ast}=0
    \quad \Longleftrightarrow \quad
    c^j\ge p_i.
\]
Therefore, to obtain the cutoff structure claimed in the proposition, it suffices to show that the effective levels \(p_i\) inherit the ordering of the qualities \(q_i=a_i/b_i\). Indeed, assume that this has been established, and fix a player \(j\). Since player \(j\) is active in project \(i\) exactly when \(c^j<p_i\), and the sequence \((p_i)\) is non-increasing, the active set of player \(j\) must be an initial segment of projects. Thus, there exists \(k^j\) such that
\(
x_i^{j,\ast}>0
\)
for
\(
i\le k^j
\)
and
\(
x_i^{j,\ast}=0
\)
for
\(i>k^j.\)

It remains to prove that these cutoffs are nested according to the players' resources. Recall from Corollary~\ref{cor_active_formula_boundary} the positive-part formula
\begin{equation}\label{positive_part_formula_global}
    x_i^{j,\ast}
    =
    L_i\left(1-\frac{c^j}{p_i}\right)_+,
    \qquad
    p_i=\frac{a_i}{L_i},
\end{equation}
with the notation \(t_+=\max\{t,0\}\).
Using this identity, define
\[
    \Psi(c):=\sum_{i=1}^n L_i\left(1-\frac{c}{p_i}\right)_+,
    \quad \text{ so that } \quad 
    r^j=\sum_{i=1}^n x_i^{j,\ast}=\Psi(c^j).
\]
The function \(\Psi\) is continuous and strictly decreasing on the relevant range: between its finitely many breakpoints,
\(
    \Psi'(c)=-\sum_{i:c<p_i}L_i/p_i<0.
\)
As $\Psi(c^j)=r^j>0$, at least one summand is active at every equilibrium rate. Therefore,
\(
r^j\ge r^\ell
\)
implies 
\(
c^j\le c^\ell.
\)
If player \(\ell\) is active in project \(i\), then \(c^\ell<p_i\). Since \(c^j\le c^\ell\), we also have \(c^j<p_i\), and hence player \(j\) is active in project \(i\). Thus, the active set of player \(\ell\) is contained in the active set of player \(j\). Consequently,
\(
r^j\ge r^\ell
\)
implies
\(
k^j\ge k^\ell.
\)
Since the players were ordered so that \(r^1\ge r^2\ge\cdots\ge r^m\), we get
\(
    k^1\ge k^2\ge\cdots\ge k^m,
\)
which is the desired statement. Thus, it remains to show that if $q_1 \ge \dots \ge q_n$, then $p_1 \ge \dots \ge p_n$ as well.

Fix the equilibrium rates \(c^1,\ldots,c^m\) and set \(F(p):=\sum_{j=1}^m(1-c^j/p)_+\). For \(q>0\), let \(p(q)\) be the unique solution of
\begin{equation}\label{effective_level_equation}
    F(p)=1-\frac{p}{q}.
\end{equation}
Existence and uniqueness follow because \(F\) is continuous and non-decreasing, whereas the right-hand side is strictly decreasing, with \(F(p)<1-p/q\) for \(p>0\) sufficiently small and \(F(q)\ge0=1-q/q\). For the actual project \(i\), summing \eqref{positive_part_formula_global} gives \(F(p_i)=(L_i-b_i)/L_i=1-p_i/q_i\), so \(p_i=p(q_i)\). Now let \(q'>q\) and put \(p=p(q)\). Then \(F(p)=1-p/q<1-p/q'\). Since \(F\) is non-decreasing and \(p\mapsto1-p/q'\) is strictly decreasing, their unique intersection satisfies \(p(q')>p(q)\). Therefore, \(q_1\ge\cdots\ge q_n\) implies \(p_1\ge\cdots\ge p_n\), which concludes the proof.
\end{proof}

Proposition~\ref{thm_cutoff_zone_structure} shows that the general equilibrium is not an arbitrary pattern of active and inactive entries. It is organized by consecutive zones and nested active player sets. Once the zone structure and the amounts \(r_s^j\) invested by each active player in each zone are known, the local formulas from Section~\ref{section_local_structure} apply inside every zone.

\begin{Cor}[Local form inside each zone]\label{cor_local_form_inside_zones}
Let \(x^\ast\) be the Nash equilibrium, and let
\(
    Z_1,\ldots,Z_N,
    \,
    J_1\supseteq J_2\supseteq\cdots\supseteq J_N
\)
be the zones and active player sets defined by Proposition~\ref{thm_cutoff_zone_structure}. Fix a zone \(Z_s\). If the zone resources
\(
    r_s^j=\sum_{i\in Z_s}x_i^{j,\ast},
    \, j\in J_s,
\)
are known, then the restriction of \(x^\ast\) to the projects in \(Z_s\) is the fully active one-zone equilibrium for the subgame with projects \(Z_s\), active players \(J_s\), resources \((r_s^j)_{j\in J_s}\), and project parameters \((a_i,b_i)_{i\in Z_s}\).

Consequently, all loads, marginal utility rates, allocations, and payoffs inside \(Z_s\) are obtained by applying Proposition~\ref{thm_one_zone_formulas} to that subgame.
\end{Cor}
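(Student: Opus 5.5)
The plan is to show that the restricted profile satisfies the equimarginal conditions of Proposition~\ref{thm_existence_uniqueness} for the subgame. Then uniqueness of equilibrium in that subgame, together with the converse part of Proposition~\ref{thm_one_zone_formulas}, identifies it as the fully active one-zone equilibrium.

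First I define the subgame. Its projects are \(Z_s\) with parameters \((a_i,b_i)_{i\in Z_s}\), its players are \(J_s\), and player \(j\in J_s\) has resource \(r_s^j\). Each \(r_s^j\) is positive, because by Proposition~\ref{thm_cutoff_zone_structure} every \(j\in J_s\) has \(k^j\ge z_s\) and is therefore strictly active on all of \(Z_s\). Let \(y\) be the restriction \(y_i^j:=x_i^{j,\ast}\) for \(i\in Z_s\), \(j\in J_s\). This profile is admissible for the subgame by the definition of \(r_s^j\).

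The key observation concerns players \(\ell\notin J_s\). Such a player has \(k^\ell<z_s\); in fact \(k^\ell\le z_{s-1}\), since the cutoffs take only the values \(z_t\). So \(x_i^{\ell,\ast}=0\) for every \(i\in Z_s\). It follows that for \(i\in Z_s\) the subgame load \(b_i+\sum_{j\in J_s}y_i^j\) coincides with the original load \(L_i(x^\ast)\), and likewise \(b_i+u_i^j\) is unchanged. By \eqref{marginal_utility_formula}, the subgame marginal utility of \(j\in J_s\) on \(i\in Z_s\) therefore equals \(c_i^j(x^\ast)\). Since all these entries are active in the original game, Proposition~\ref{thm_existence_uniqueness} gives \(c_i^j(x^\ast)=c^j\) for all \(i\in Z_s\). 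Hence \(y\) is fully active in the subgame and satisfies \eqref{equimarginal_characterization} there with the same rates \(c^j\), and there are no inactive entries to check.

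Next I apply Proposition~\ref{thm_existence_uniqueness} to the subgame. This is legitimate because the subgame is an instance of the model \(M(|Z_s|,m_s,\cdot,\cdot,\cdot)\) with positive data. It shows that \(y\) is the unique Nash equilibrium of the subgame. Since that equilibrium is fully active, the forward direction of Proposition~\ref{thm_one_zone_formulas} applies with \(R=R_s\), \(B=B_s\) and \(m=m_s\). Its formulas then yield the loads, the rates \(c^j=\beta^jC_s\), the allocations and the zone payoffs. In the relabeling of the subgame the players are simply listed in the order of their \(r_s^j\); nothing in Proposition~\ref{thm_one_zone_formulas} uses the ordering.

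The only delicate point, and it is mild, is the claim that players outside \(J_s\) contribute nothing to the loads on \(Z_s\). This is exactly where the nestedness and initial-segment structure of Proposition~\ref{thm_cutoff_zone_structure} are needed. I would spell out the step from \(k^\ell<z_s\) to \(k^\ell\le z_{s-1}\), which holds because the distinct cutoff values are precisely \(z_1<\dots<z_N\).
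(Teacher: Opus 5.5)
Your proof is correct, but it argues through first-order conditions, whereas the paper argues directly at the level of payoffs.

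\textbf{The paper's route} is a deviation-lifting argument. Players outside $J_s$ invest nothing in $Z_s$, so player $j$'s payoff from $Z_s$ in the subgame equals her payoff from $Z_s$ in the original game. Hence any profitable redistribution of $r_s^j$ inside $Z_s$, with all other coordinates frozen, would be a profitable deviation in the original game, contradicting the Nash property of $x^\ast$.

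\textbf{Your route} notes that the loads and the quantities $b_i+u_i^j$ on $Z_s$ are unchanged. So the subgame marginal utilities equal $c_i^j(x^\ast)=c^j$, and you then apply Proposition~\ref{thm_existence_uniqueness} to the subgame.

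\textbf{Comparison.} The paper's argument needs no derivative computation and uses only the projectwise separability of the payoffs. Yours costs one application of the equimarginal characterization but shows more: the subgame's marginal rates coincide with the global equilibrium rates $c^j$. That is exactly the smooth-fit identity $c_s^j=c^j$ used later in Lemma~\ref{lem_block_affine_identity} and in Block Pandora.

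You also make explicit several points the paper leaves implicit:
\begin{enumerate}
    \item Positivity of $r_s^j$, so the subgame is a legitimate instance of the model.
    \item Uniqueness of the subgame equilibrium, which justifies ``the fully active one-zone equilibrium.''
    \item The step $k^\ell\le z_{s-1}$ for $\ell\notin J_s$, which fills in the paper's ``by definition of the zones.''
\end{enumerate}
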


\begin{proof}
Fix a zone \(Z_s\). By definition of the zones, every player in \(J_s\) is active in every project of \(Z_s\), and every player outside \(J_s\) invests zero in every project of \(Z_s\).

Now fix the total amounts \(r_s^j\), \(j\in J_s\), invested in the zone. If the restriction of \(x^\ast\) to \(Z_s\) were not a Nash equilibrium of the corresponding fully active subgame, then some player \(j\in J_s\) could improve her payoff by redistributing only the amount \(r_s^j\) among the projects of \(Z_s\), leaving all investments outside \(Z_s\) unchanged. This would be an admissible deviation in the original game and would contradict the Nash property of \(x^\ast\).

Thus, the restriction to \(Z_s\) is the fully active equilibrium of the local subgame, and Proposition~\ref{thm_one_zone_formulas} applies.
\end{proof}

The corollary is local rather than closed form, since the zones and the resources $r_s^j$ are unknown and must satisfy $c_s^j=c_t^j$ whenever player $j$ is active in both zones. Block Pandora imposes these smooth-fit conditions explicitly, whereas the projected marginal-utility method converges directly to the global equilibrium, whose active zones automatically satisfy these equalities.

We conclude this section by recording the monotonicity properties of the equilibrium. The following proposition is the formal statement of
part~\ref{thm_monotonicity} of Theorem~\ref{thm_main_structure}.

\begin{Prop}[Monotonicity inside equilibrium]\label{thm_global_monotonicity}
Let \(x^\ast\) be the Nash equilibrium.

\begin{enumerate}[label=\textup{(\roman*)}]
    \item If \(r^j\ge r^\ell\), then
    \(
        c^j\le c^\ell,
        \,
        k^j\ge k^\ell,
        \,
        \text{and }
        x_i^{j,\ast}\ge x_i^{\ell,\ast}
    \)
    for every $i=1,\ldots,n$.
    
    \item Assume, in addition, that the congestion parameters are ordered as
    \(
        b_1\ge b_2\ge\cdots\ge b_n.
    \)
    Then, for every player \(j\),
    \(
        x_1^{j,\ast}\ge x_2^{j,\ast}\ge\cdots\ge x_n^{j,\ast}.
    \)
    \label{thm_global_monotonicity_ii}
\end{enumerate}
\end{Prop}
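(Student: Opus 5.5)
For part (i) I would reuse the objects from the proof of Proposition~\ref{thm_cutoff_zone_structure}: the effective levels $p_i=a_i/L_i$, the positive-part representation \eqref{positive_part_formula_global}, and the resource function $\Psi(c)=\sum_i L_i(1-c/p_i)_+$. Two facts are already shown there: $r^j=\Psi(c^j)$, and $\Psi$ is continuous and non-increasing, with derivative $-\sum_{i:c<p_i}L_i/p_i<0$ wherever $\Psi(c)>0$. I will argue by contradiction. Suppose $r^j\ge r^\ell$ but $c^j>c^\ell$. Then $\Psi(c^\ell)=r^\ell>0$, so $\Psi$ is strictly decreasing on $[c^\ell,c^j]$ as long as it stays positive, and it stays positive up to $c^j$ because $\Psi(c^j)=r^j>0$. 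This gives $r^j=\Psi(c^j)<\Psi(c^\ell)=r^\ell$, a contradiction, so $c^j\le c^\ell$. The inclusion of active sets and $k^j\ge k^\ell$ were already derived from this inequality in that proof. Finally, each $x_i^{j,\ast}=L_i(1-c^j/p_i)_+$ with $L_i$ and $p_i$ common to all players, and the map $c\mapsto L_i(1-c/p_i)_+$ is non-increasing. Hence $c^j\le c^\ell$ gives $x_i^{j,\ast}\ge x_i^{\ell,\ast}$ for every $i$.

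For part \ref{thm_global_monotonicity_ii} I would fix a player $j$ and show that both factors in $x_i^{j,\ast}=L_i\,(1-c^j/p_i)_+$ are non-negative and non-increasing in $i$. Their product is then non-increasing, which is the claim.

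The second factor is handled by the last step of the proof of Proposition~\ref{thm_cutoff_zone_structure}: there $q_1\ge\cdots\ge q_n$ was shown to imply $p_1\ge\cdots\ge p_n$, so $(1-c^j/p_i)_+$ is non-increasing in $i$.

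The first factor, the load ordering $L_i\ge L_{i+1}$, is the only genuinely new point and the step I expect to need care. I would use the same function $F(p)=\sum_{\ell}(1-c^\ell/p)_+$ and the identity $F(p_i)=1-p_i/q_i$ established there. Since $b_i=a_i/q_i$, we can write
\[
L_i=\frac{a_i}{p_i}=\frac{b_i\,q_i}{p_i}=\frac{b_i}{1-F(p_i)},
\]
where the denominator $1-F(p_i)=p_i/q_i$ is strictly positive. Because $F$ is non-decreasing and $p_i\ge p_{i+1}$, we get $0<1-F(p_i)\le 1-F(p_{i+1})$. Combined with the added hypothesis $b_i\ge b_{i+1}>0$, this yields $L_i\ge L_{i+1}$.

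With both factors non-increasing and non-negative, $x_i^{j,\ast}\ge x_{i+1}^{j,\ast}$ for every $i$. This also covers the inactive projects beyond the cutoff $k^j$, where the allocations vanish.
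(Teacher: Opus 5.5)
Your proof is correct and follows the paper's argument: part (i) uses the resource function $\Psi$ and the positive-part formula $x_i^{j,\ast}=L_i(1-c^j/p_i)_+$, and part (ii) shows that both $p_i$ and $L_i=b_iq_i/p_i$ are non-increasing in $i$. The only difference is cosmetic: you get the monotonicity of $q_i/p_i=1/(1-F(p_i))$ directly from the monotonicity of $F$, whereas the paper checks the piecewise expression $p/(S_s-(s-1)p)$ on each activity interval and glues the pieces by continuity.
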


\begin{proof}
The inequalities $c^j\le c^\ell$ and $k^j\ge k^\ell$ in part~\textup{(i)} were proved in Proposition~\ref{thm_cutoff_zone_structure}. The positive-part formula
\begin{equation}\label{positive-part_formula}
    x_i^{j,\ast}
    =
    L_i\left(1-\frac{c^j}{p_i}\right)_+, \quad i = 1, \dots, n,
\end{equation}
with $p_i$ defined in \eqref{p_i_def},
is non-increasing in $c^j$, and therefore also gives $x_i^{j,\ast}\ge x_i^{\ell,\ast}$.

For part~\textup{(ii)}, Proposition~\ref{thm_cutoff_zone_structure} gives $p_i\ge p_{i+1}$. On an interval where exactly $s$ players are active, \eqref{effective_level_equation} yields
\[
    \frac{q}{p}=\frac{p}{S_s-(s-1)p},
\]
where $S_s$ is the sum of the $s$ active marginal rates. This expression is increasing in $p$, and the pieces fit continuously at the activity thresholds. Hence,
\(
    q_i/p_i\ge q_{i+1}/p_{i+1}.
\)
Since $L_i=b_iq_i/p_i$ and $b_i\ge b_{i+1}$, it follows that $L_i\ge L_{i+1}$. Together with $p_i\ge p_{i+1}$ and the positive-part formula \eqref{positive-part_formula}, this gives $x_i^{j,\ast}\ge x_{i+1}^{j,\ast}$, which concludes the proof.
\end{proof}

\begin{remark}
The additional monotonicity assumption on \(b\) in part (ii) is essential. If the projects are ordered by \(q_i=a_i/b_i\) but the values \(b_i\) are not ordered, then projectwise monotonicity of the allocations can fail already in the one-player model.
\end{remark}

\section{Algorithms}\label{section_algorithms}

The most immediate iterative procedure for finding the Nash equilibrium is the best response algorithm, in which players are updated sequentially using the most recently available strategies of their opponents. In our numerical experiments this procedure often converges rapidly, but it is not globally convergent: the final example in Section~\ref{section_examples} provides a two-player instance in which almost every initial profile approaches a strict two-cycle. Thus, uniqueness of equilibrium and diagonal strict concavity do not by themselves imply convergence of best-response dynamics. This failure motivates the two algorithms developed below.

\subsection{The Projected Marginal-Utility Algorithm}\label{subsection_projected_marginal_utility}

We first develop a globally convergent algorithm that does not require the equilibrium active sets, cutoffs, or zones to be known in advance. At every iteration, each player moves in the direction of her current marginal utilities and then projects the tentative allocation back onto her resource simplex. The projection keeps every iterate admissible and automatically permits coordinates to enter or leave the active set.

The update resembles projected gradient ascent, but it is not, in general, gradient ascent for a common scalar objective. Indeed, for distinct players $j\ne\ell$ and a fixed project $i$, differentiation of \eqref{marginal_utility_formula} gives
\[
    \frac{\partial c_i^j}{\partial x_i^\ell}(x)
    =\frac{a_i(2x_i^j-L_i(x))}{L_i(x)^3},
    \qquad
    \frac{\partial c_i^\ell}{\partial x_i^j}(x)
    =\frac{a_i(2x_i^\ell-L_i(x))}{L_i(x)^3},
\]
and these cross-partials need not agree. Thus, the pseudo-gradient $\Phi$ of \eqref{def_Phi} is generally not the gradient of a potential function. The convergence result below instead relies on the quantitative anti-monotonicity established in Lemma~\ref{lem_concavity_antimonotonicity}, together with the global Lipschitz estimate in Lemma \ref{lem_pseudogradient_lipschitz}, proved in Appendix \ref{appendix_lem_pseudogradient_lipschitz}.

\begin{Lem}[Global Lipschitz continuity of the pseudo-gradient]\label{lem_pseudogradient_lipschitz}
The pseudo-gradient $\Phi$ of \eqref{def_Phi} is globally Lipschitz on $X$, defined in \eqref{def_X_j}--\eqref{def_X}. In particular, with
\begin{equation}\label{eq_pseudogradient_lipschitz_constant}
    K_\Phi:=(m+1)\max_{1\le i\le n}\frac{a_i}{b_i^2},
\end{equation}
we have
\begin{equation}\label{eq_pseudogradient_lipschitz}
    \lVert\Phi(x)-\Phi(y)\rVert
    \le K_\Phi\lVert x-y\rVert
    \qquad \text{for all }x,y\in X.
\end{equation}
\end{Lem}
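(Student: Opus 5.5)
We bound the Jacobian of $\Phi$ uniformly on a convex set containing $X$ and then apply the mean value inequality along segments. Since $X$ is convex, for $x,y\in X$ the segment $z(t)=y+t(x-y)$ lies in $X$. Along it every load satisfies $L_i(z)\ge b_i>0$, so $\Phi$ is smooth there and
\[
    \Phi(x)-\Phi(y)=\int_0^1 D\Phi(z(t))\,(x-y)\,dt .
\]
It therefore suffices to show $\lVert D\Phi(z)\rVert_{\mathrm{op}}\le K_\Phi$ for every $z\in X$ (indeed for every $z\in\R^{mn}_+$).

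\textbf{Structure of the Jacobian.} Since $c_i^j$ depends only on the $i$th column $(x_i^1,\ldots,x_i^m)$, the Jacobian is block diagonal after grouping coordinates by project. The operator norm equals the maximum over $i$ of the norms of the $m\times m$ blocks $M_i$. Differentiating \eqref{marginal_utility_formula} gives the entries of $M_i$:
\[
    (M_i)_{jj}=\frac{\partial c_i^j}{\partial x_i^j}=-\frac{2a_i(b_i+u_i^j)}{L_i^3},
    \qquad
    (M_i)_{j\ell}=\frac{\partial c_i^j}{\partial x_i^\ell}=\frac{a_i(2x_i^j-L_i)}{L_i^3}\quad(\ell\ne j),
\]
the latter as already recorded in Section~\ref{section_algorithms}.

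\textbf{Entrywise bounds.} Because $0\le b_i+u_i^j\le L_i$ and $|2x_i^j-L_i|\le L_i$, every entry satisfies
\[
    |(M_i)_{jj}|\le \frac{2a_i}{L_i^2}\le\frac{2a_i}{b_i^2},
    \qquad
    |(M_i)_{j\ell}|\le\frac{a_i}{L_i^2}\le\frac{a_i}{b_i^2}.
\]
These bounds use only $L_i\ge b_i$.

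\textbf{Block norm.} We bound $\lVert M_i\rVert_2$ by $\sqrt{\lVert M_i\rVert_1\lVert M_i\rVert_\infty}$. The row sums of absolute values are at most $(2+(m-1))a_i/b_i^2=(m+1)a_i/b_i^2$. For the column sums, the $\ell$th column consists of the diagonal entry together with the off-diagonal entries $a_i(2x_i^j-L_i)/L_i^3$, $j\ne\ell$, so the same bound $(m+1)a_i/b_i^2$ holds. Hence $\lVert M_i\rVert_2\le(m+1)a_i/b_i^2$. Taking the maximum over $i$ gives $K_\Phi$, and the integral representation yields \eqref{eq_pseudogradient_lipschitz}.

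\textbf{Main obstacle.} The only delicate point is making sure the constant is $(m+1)$ rather than something larger. Bounding the row sum, column sum and diagonal separately avoids a naive Frobenius estimate, which would only give order $m$ times a constant anyway. Adding all the terms crudely still gives $m+1$, so no cancellation is needed. I would also check that the diagonal bound $2a_i/L_i^2$ is not exceeded, which holds because $b_i+u_i^j\le L_i$. Everything else is routine.
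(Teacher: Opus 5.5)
Your proposal is correct and follows the paper's proof essentially step for step. Both arguments use the projectwise block structure of the Jacobian, the same entrywise bounds $2a_i/b_i^2$ and $a_i/b_i^2$, and the estimate $\lVert J_i\rVert_2\le\sqrt{\lVert J_i\rVert_1\lVert J_i\rVert_\infty}\le(m+1)a_i/b_i^2$. Both then conclude by integrating along the segment $y+t(x-y)$ and regrouping the coordinates of $\Phi$ by project.
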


Recall that $\Pi_X:\R^{mn}\to X$ denotes the Euclidean projection onto $X$. For $\nu>0$, define the projected marginal-utility map
\begin{equation}\label{eq_projected_map_definition}
    P_\nu(x):=\Pi_X\bigl(x+\nu\Phi(x)\bigr),
    \qquad x\in X.
\end{equation}
Because $X=X^1\times\cdots\times X^m$, the projection separates across players:
\[
    P_\nu(x)^j
    =\Pi_{X^j}\bigl(x^j+\nu\nabla_{x^j}F^j(x)\bigr),
    \qquad j=1,\ldots,m.
\]
Starting from an arbitrary profile $x^{(0)}\in X$, the algorithm is the simultaneous iteration
\begin{equation}\label{eq_projected_marginal_utility_iteration}
    x^{(k+1)}=P_\nu(x^{(k)})
    =\Pi_X\bigl(x^{(k)}+\nu\Phi(x^{(k)})\bigr),
    \qquad k=0,1,2,\ldots.
\end{equation}
The word \emph{simultaneous} is important: all players form their updates from the same current profile $x^{(k)}$ before any coordinate is replaced.

The playerwise projection in \eqref{eq_projected_marginal_utility_iteration} is explicit up to one scalar threshold. If
\[
    y_i^{j,(k)}:=x_i^{j,(k)}+\nu c_i^j(x^{(k)}),
    \quad \text{ then } \quad
    x_i^{j,(k+1)}
    =\max\{y_i^{j,(k)}-\tau_j^{(k)},0\},
\]
where $\tau_j^{(k)}$ is the unique number satisfying
\(
    \sum_{i=1}^n\max\{y_i^{j,(k)}-\tau_j^{(k)},0\}=r^j.
\) 
Thus, every iteration consists only of evaluating the current marginal utilities and projecting one vector onto a simplex for each player. The threshold $\tau_j^{(k)}$ can be found by sorting the $n$ coordinates of $y^{j,(k)}$, and the $m$ playerwise projections are independent and can be carried out in parallel.

The following proposition gives the globally convergent construction in part~\ref{thm_algo} of Theorem~\ref{thm_main_structure}.

\begin{Prop}[Global linear convergence]\label{thm_projected_algorithm_convergence}
Let $\mu$, $K_\Phi$ be as in \eqref{def_R_mu}, \eqref{eq_pseudogradient_lipschitz_constant}, respectively.
If
\begin{equation}\label{eq_projected_step_size_condition}
    0<\nu<\frac{2\mu}{K_\Phi^2},
\end{equation}
then $P_\nu$ is a contraction on $X$ with contraction factor
\begin{equation}\label{eq_projected_contraction_factor}
    q_\nu:=\sqrt{1-2\nu\mu+\nu^2K_\Phi^2}<1.
\end{equation}
Consequently, for every initial profile $x^{(0)}\in X$, the sequence generated by \eqref{eq_projected_marginal_utility_iteration} converges to the unique Nash equilibrium $x^\ast$, and
\begin{equation}\label{eq_projected_linear_rate}
    \lVert x^{(k)}-x^\ast\rVert
    \le q_\nu^k\lVert x^{(0)}-x^\ast\rVert,
    \qquad k=0,1,2,\ldots.
\end{equation}
\end{Prop}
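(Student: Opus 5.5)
The plan is the standard projected-gradient contraction argument for strongly monotone, Lipschitz operators, with the sign reversed because $\Phi$ is anti-monotone. There are three steps: show $P_\nu$ is a contraction on $X$, identify its fixed point, and iterate.

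\textbf{Step 1: contraction.} Fix $x,y\in X$. The Euclidean projection $\Pi_X$ onto the nonempty closed convex set $X$ is nonexpansive, so
\[
    \lVert P_\nu(x)-P_\nu(y)\rVert
    \le
    \bigl\lVert (x-y)+\nu\bigl(\Phi(x)-\Phi(y)\bigr)\bigr\rVert .
\]
I will expand the square of the right-hand side:
\[
    \lVert x-y\rVert^2
    +2\nu (x-y)\cdot\bigl(\Phi(x)-\Phi(y)\bigr)
    +\nu^2\lVert\Phi(x)-\Phi(y)\rVert^2 .
\]
The middle term is bounded using the strong anti-monotonicity \eqref{eq_strong_antimonotonicity} of Lemma~\ref{lem_concavity_antimonotonicity}, which gives at most $-2\nu\mu\lVert x-y\rVert^2$. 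The last term is bounded using the Lipschitz estimate \eqref{eq_pseudogradient_lipschitz} of Lemma~\ref{lem_pseudogradient_lipschitz}, which gives at most $\nu^2K_\Phi^2\lVert x-y\rVert^2$. Together these yield
\[
    \lVert P_\nu(x)-P_\nu(y)\rVert\le q_\nu\lVert x-y\rVert .
\]

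It remains to check that $q_\nu$ is real and strictly less than one. We have $q_\nu<1$ exactly when $\nu^2K_\Phi^2<2\nu\mu$, that is, when $\nu<2\mu/K_\Phi^2$; this is \eqref{eq_projected_step_size_condition}. For the radicand, note that applying \eqref{eq_strong_antimonotonicity} and Cauchy--Schwarz to any pair $x\neq y$ gives $\mu\le K_\Phi$. Hence
\[
    1-2\nu\mu+\nu^2K_\Phi^2\ge(1-\nu\mu)^2\ge0 ,
\]
so $q_\nu\ge0$ is well defined.

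\textbf{Step 2: the fixed point.} $P_\nu$ maps $X$ into $X$, and $X$ is compact, hence complete. By the Banach fixed-point theorem, $P_\nu$ has a unique fixed point in $X$. Corollary~\ref{cor_vi_projection} shows that, for every $\nu>0$, the Nash equilibrium $x^\ast$ of Proposition~\ref{thm_existence_uniqueness} is a fixed point of $P_\nu$. By uniqueness, it is the fixed point.

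\textbf{Step 3: the rate.} Since $x^\ast=P_\nu(x^\ast)$, applying the contraction inequality $k$ times to the pair $x^{(k)},x^\ast$ gives \eqref{eq_projected_linear_rate}. Global convergence from any $x^{(0)}\in X$ follows because $q_\nu<1$.

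No step here is a genuine obstacle: the real work sits in Lemmas~\ref{lem_concavity_antimonotonicity} and~\ref{lem_pseudogradient_lipschitz}, which are taken as given. The only point requiring care is the sign convention. The iteration steps along $+\nu\Phi$ while $\Phi$ is anti-monotone, so the cross term in the expansion is negative; this is exactly what makes the step-size condition work.
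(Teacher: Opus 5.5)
Your proposal is correct and follows the paper's proof essentially step for step: nonexpansiveness of $\Pi_X$, expanding the square and bounding it with the anti-monotonicity and Lipschitz estimates, applying Banach's fixed-point theorem on the complete set $X$, and using Corollary~\ref{cor_vi_projection} to identify the fixed point with $x^\ast$. Your extra check that the radicand is nonnegative is a small addition the paper omits; the Cauchy--Schwarz route needs two distinct points of $X$ (it is vacuous when $n=1$), whereas the direct comparison $\mu\le a_ib_i/(b_i+R)^3<a_i/b_i^2\le K_\Phi$ always works.
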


\begin{proof}
Since $X$ is a nonempty closed convex subset of $\R^{mn}$, the Euclidean projection $\Pi_X:\R^{mn}\to X$ is nonexpansive; that is,
\(
    \lVert\Pi_X(u)-\Pi_X(v)\rVert\le\lVert u-v\rVert
    \,\text{ for all }u,v\in\R^{mn}.
\)
Hence, for any $x,y\in X$,
\begin{align*}
    \lVert P_\nu(x)-P_\nu(y)\rVert^2
    &\le
    \bigl\lVert (x-y)+\nu\bigl(\Phi(x)-\Phi(y)\bigr)\bigr\rVert^2\\
    &=\lVert x-y\rVert^2
      +2\nu(x-y)\cdot\bigl(\Phi(x)-\Phi(y)\bigr)
      +\nu^2\lVert\Phi(x)-\Phi(y)\rVert^2\\
    &\le
    \bigl(1-2\nu\mu+\nu^2K_\Phi^2\bigr)\lVert x-y\rVert^2,
\end{align*}
where the last line uses \eqref{eq_strong_antimonotonicity} and \eqref{eq_pseudogradient_lipschitz}. Under \eqref{eq_projected_step_size_condition}, the coefficient is strictly smaller than one, so $P_\nu:X\to X$ is a contraction with factor $q_\nu$.

Since $X$ is closed in the finite-dimensional Euclidean space $\R^{mn}$, it is complete. The Banach fixed-point theorem therefore gives a unique fixed point of $P_\nu$ and convergence of the iterates at the rate \eqref{eq_projected_linear_rate}. By Corollary~\ref{cor_vi_projection}, the fixed points of $P_\nu$ are exactly the Nash equilibria. Hence, the fixed point is $x^\ast$.
\end{proof}

Because $\mu>0$ and $K_\Phi<\infty$, the interval in \eqref{eq_projected_step_size_condition} is nonempty for every admissible parameter vector. A convenient fully explicit choice is
\(
    \nu={\mu}/{K_\Phi^2},
\)
which minimizes the upper bound in \eqref{eq_projected_contraction_factor} and gives
$q_\nu=\sqrt{1-\mu^2/K_\Phi^2}$.
The constants $\mu$ and $K_\Phi$ are deliberately global, so that the resulting step size can be conservative. Its advantage is that it is available directly from the model parameters and is valid regardless of which coordinates are active at equilibrium.

\subsection{The Block Pandora Algorithm}\label{section_block_pandora_algorithm}

The projected marginal-utility method treats the equilibrium problem as a monotone variational inequality and does not use the special form of the active sets.  We now describe a complementary, structure-exploiting algorithm. Conditional on a nested cutoff pattern, Block Pandora assembles the equilibrium from the one-zone formulas of Section~\ref{section_local_structure}.  The only nonlinear operations are one-dimensional smooth-fit equations, which can be solved by bisection.  The resulting construction is particularly useful when the equilibrium has few zones, since it works with blocks and layers rather than with all $mn$ allocation variables. We focus on the mathematical architecture of the procedure---its block decomposition, one-dimensional gluing equations, and certification step---rather than on implementation details, whose full treatment would require considerably more space.

Fix a candidate cutoff vector
\(
    \kappa=(k^1,\ldots,k^m),
    \,
    n\ge k^1\ge\cdots\ge k^m\ge1,
\)
and restrict player $j$ to the first $k^j$ projects:
\(
    X^j(\kappa)
    :=\left\{x^j\in X^j:x_i^j=0\ \text{for }i>k^j\right\}.
\)
The corresponding restricted game has a unique Nash equilibrium by the same argument as in Section~\ref{section_existence_equimarginal}.  We call it \emph{strict} if $x_i^j>0$ for every allowed coordinate $i\le k^j$.  In that case the zones $Z_s$, active player sets $J_s$, and layers $Y_s$ are exactly those introduced in Section~\ref{section_global_structure}.

The following identity is the algebraic basis of the algorithm.  Suppose that a zone $Z_s$ is solved as a fully active one-zone game with active players $J_s$ and zone resources $(r_s^j)_{j\in J_s}$.  Let $c_s^j$ be the resulting marginal rate of player $j$ in that zone, put
\[
    C_s:=\sum_{j\in J_s}c_s^j,
    \qquad
    \Lambda_s:=\sum_{i\in Z_s}L_i=R_s+B_s,
    \qquad
    W_s:=\sum_{i\in Z_s}\frac{L_i^2}{a_i}.
\]

\begin{Lem}[Affine block identity]\label{lem_block_affine_identity}
For every $j\in J_s$,
\begin{equation}\label{eq_block_affine_identity}
    r_s^j=\Lambda_s-W_sc_s^j.
\end{equation}
Moreover,
\begin{equation}\label{eq_block_W_formula}
    W_s=\frac{(m_s-1)R_s+m_sB_s}{C_s}.
\end{equation}
Consequently, if a layer $Y_t$ is smoothly fitted across the zones $Z_1,\ldots,Z_t$, then, with
\[
    \Lambda^{(t)}:=\sum_{s=1}^t\Lambda_s,
    \qquad
    W^{(t)}:=\sum_{s=1}^tW_s,
\]
we have
\begin{equation}\label{eq_layer_affine_identity}
    c^j=\frac{\Lambda^{(t)}-r^j}{W^{(t)}}
    \qquad (j\in Y_t).
\end{equation}
In particular, the sum of the marginal rates over any subset $E\subseteq Y_t$ depends on the resources of the players in $E$ only through their total:
\begin{equation}\label{eq_layer_subset_rate}
    \sum_{j\in E}c^j
    =\frac{|E|\Lambda^{(t)}-\sum_{j\in E}r^j}{W^{(t)}}.
\end{equation}
\end{Lem}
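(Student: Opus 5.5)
The plan is to work entirely inside the fully active one-zone game on $Z_s$ with players $J_s$, and to reuse the computations from the proof of Proposition~\ref{thm_one_zone_formulas}, now with $n$, $m$, $R$, $B$ replaced by $|Z_s|$, $m_s$, $R_s$, $B_s$.

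\textbf{Step 1: the identity \eqref{eq_block_affine_identity}.} Since every $j\in J_s$ is active on every $i\in Z_s$, Corollary~\ref{cor_active_formula_boundary} gives $x_i^j=L_i-c_s^jL_i^2/a_i$. Summing over $i\in Z_s$ yields
\[
r_s^j=\sum_{i\in Z_s}L_i-c_s^j\sum_{i\in Z_s}L_i^2/a_i=\Lambda_s-W_sc_s^j.
\]
The equality $\Lambda_s=R_s+B_s$ holds because only players in $J_s$ invest in $Z_s$.

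\textbf{Step 2: the formula \eqref{eq_block_W_formula}.} Sum the active-project formula over $j\in J_s$. This gives the zone analogue of \eqref{eq_one_zone_load_quadratic}, namely $C_sL_i^2/a_i=(m_s-1)L_i+b_i$. Summing over $i\in Z_s$ gives
\[
C_sW_s=(m_s-1)\Lambda_s+B_s=(m_s-1)R_s+m_sB_s,
\]
which is \eqref{eq_block_W_formula}. Alternatively, one can sum Step~1 over $j\in J_s$ to get $R_s=m_s\Lambda_s-W_sC_s$, which rearranges to the same identity.

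\textbf{Step 3: the layer identity \eqref{eq_layer_affine_identity}.} Take $j\in Y_t$, so $j$ is active exactly in $Z_1,\dots,Z_t$, i.e.\ $j\in J_s$ for all $s\le t$ and $j$ invests nothing in later zones. Smooth fit means $c_s^j=c^j$ for $s\le t$; this is what the equimarginal principle of Proposition~\ref{thm_existence_uniqueness} gives at the true or restricted equilibrium. Summing Step~1 over $s=1,\dots,t$ gives
\[
r^j=\sum_{s\le t}r_s^j=\Lambda^{(t)}-W^{(t)}c^j.
\]
Since $W^{(t)}>0$, we can solve for $c^j$. Finally, summing over $j\in E$ gives \eqref{eq_layer_subset_rate}.

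\textbf{Main obstacle.} There is little real difficulty here. The only care needed is bookkeeping: we must confirm that $r^j$ is exhausted by the zones $Z_1,\dots,Z_t$, using the definition of the layer $Y_t$ as the players with cutoff $z_t$. We must also confirm that each zone's loads involve only the players in $J_s$, so that $\Lambda_s=R_s+B_s$ and the summed quadratic holds with $m_s$ in place of $m$.
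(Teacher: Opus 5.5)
Your proof is correct and matches the paper's argument step for step: sum $x_i^j=L_i-c_s^jL_i^2/a_i$ over $Z_s$, sum the zone load equation $C_sL_i^2/a_i=(m_s-1)L_i+b_i$ over $Z_s$, and then sum over $s\le t$ using the smooth-fit equalities $c_s^j=c^j$. One small point: Block Pandora also applies this lemma to trial one-zone subproblems, not only to the true equilibrium, so Step~1 is better justified by citing Proposition~\ref{thm_one_zone_formulas}, or by directly inverting \eqref{marginal_utility_formula} under $c_i^j=c_s^j$, than by citing Corollary~\ref{cor_active_formula_boundary}; the algebra is the same either way.
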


\begin{proof}
By Proposition~\ref{thm_one_zone_formulas}, inside $Z_s$,
\(
    x_i^j=L_i-c_s^j{L_i^2}/{a_i}.
\)
Summing over $i\in Z_s$ gives \eqref{eq_block_affine_identity}.  The local load equation
\(
    C_sL_i^2/a_i=(m_s-1)L_i+b_i
\)
gives \eqref{eq_block_W_formula} after summation over the zone.  Finally, summing \eqref{eq_block_affine_identity} over $s\le t$ and using the smooth-fit equalities $c_s^j=c^j$ proves \eqref{eq_layer_affine_identity} and \eqref{eq_layer_subset_rate}.
\end{proof}

Writing
\[
    \rho_t:=\sum_{j\in Y_t}r^j,
    \qquad
    \gamma_t:=\sum_{j\in Y_t}c^j,
\]
we have $C_s=\sum_{t=s}^N\gamma_t$, while \eqref{eq_layer_subset_rate} with $E=Y_t$ expresses $\gamma_t$ using only $|Y_t|$, $\rho_t$, and the aggregate zone quantities. Thus, for a fixed cutoff table, all aggregate loads and rates depend on the players in a layer only through the size and total resource of that layer. The individual resources enter only in the final reconstruction \eqref{eq_layer_affine_identity}.

\paragraph{Recursive gluing.}
For one zone, Block Pandora is simply the one-zone construction of Proposition~\ref{thm_one_zone_formulas}.  Suppose now that the candidate table has $N\ge2$ zones.  The rightmost active block is $Y_N\times Z_N$; only the players in $Y_N$ use $Z_N$.  Write
\[
    \ell:=|Y_N|,
    \qquad
    \rho:=\sum_{j\in Y_N}r^j,
\]
and let $y$ denote the total resource that these players assign to $Z_N$.

For a trial value of $y$, solve the one-zone problem on $Z_N$ with $\ell$ players and total resource $y$.  Its aggregate quantities will be denoted by
\[
    \Lambda_R(y),\qquad W_R(y),\qquad
    \Gamma_R(y):=\sum_{j\in Y_N}c_N^j.
\]
Here $\Gamma_R(y)$ is exactly the cumulative rate $C_N$ determined by the basic nonlinear equation.  Next solve, recursively, the table formed by the first $N-1$ zones, assigning the marked group $Y_N$ the total resource $\rho-y$ on those zones.  In this left table the layers $Y_{N-1}$ and $Y_N$ have the same cutoff and are merged for purposes of the aggregate calculation, although $Y_N$ remains marked.  Let
\[
    \Lambda_L(y):=\sum_{s=1}^{N-1}\Lambda_s(y),
    \qquad
    W_L(y):=\sum_{s=1}^{N-1}W_s(y).
\]
By \eqref{eq_layer_subset_rate}, the sum of the left marginal rates of the marked players is
\begin{equation}\label{eq_block_left_rate}
    \Gamma_L(y)
    =\frac{\ell\Lambda_L(y)-(\rho-y)}{W_L(y)}.
\end{equation}
The two sides fit precisely when
\begin{equation}\label{eq_block_gluing_equation}
    H(y):=\Gamma_L(y)-\Gamma_R(y)=0.
\end{equation}
We call a connected interval of trial values a \emph{strict branch} if all recursively defined left and right subproblems exist and remain strict throughout that interval. On every strict branch, $\Gamma_L$ and $\Gamma_R$ are continuous, $\Gamma_L$ is strictly increasing, and $\Gamma_R$ is strictly decreasing. Hence $H$ is continuous and strictly increasing and has at most one zero on that branch. Whenever $y_-<y_+$ belong to the same strict branch and
\(
    H(y_-)H(y_+)<0,
\)
bisection converges to the unique zero in $(y_-,y_+)$.

Let $y^\ast$ be this solution, and suppress its argument in the notation below.  The individual split of a player $j\in Y_N$ is then explicit:
\begin{equation}\label{eq_block_individual_rate}
    c^j
    =\frac{\Lambda_L+\Lambda_R-r^j}{W_L+W_R},
\end{equation}
\begin{equation}\label{eq_block_individual_split}
    r_N^j=\Lambda_R-W_Rc^j,
    \qquad
    r_L^j=r^j-r_N^j=\Lambda_L-W_Lc^j.
\end{equation}
Equation \eqref{eq_block_gluing_equation} guarantees that $\sum_{j\in Y_N}r_N^j=y^\ast$.  The right block is recovered from Proposition~\ref{thm_one_zone_formulas}, and the left allocations are recovered recursively.  If every prescribed active coordinate is positive, the result is the strict restricted equilibrium for the candidate cutoff vector; otherwise that candidate is discarded.

The following proposition gives the structure-exploiting construction in part~\ref{thm_algo} of Theorem~\ref{thm_main_structure}.

\begin{Prop}[Reconstruction for a fixed cutoff table]
\label{prop_block_pandora_correctness}
Fix a nested cutoff vector $\kappa$.

\begin{enumerate}[label=\textup{(\roman*)}]
    \item On every strict branch, the gluing equation
    \eqref{eq_block_gluing_equation} has at most one solution. If a sign-changing bracket on that branch is available, bisection converges to the unique solution.

    \item Every strict Nash equilibrium of the restricted game associated with $\kappa$ satisfies the recursive gluing equations and is recovered by the construction on the strict branch containing its block-resource vector. Conversely, whenever the construction produces a strict profile $x$, that profile is the unique Nash equilibrium of the restricted game.

    \item A strict restricted equilibrium $x$ is the Nash equilibrium of the original game if and only if it satisfies the omitted-project inequalities
    \begin{equation}\label{eq_block_boundary_certificate}
        c_i^j(x)=\frac{a_i}{L_i(x)}\le c^j,
        \qquad i>k^j.
    \end{equation}
\end{enumerate}
\end{Prop}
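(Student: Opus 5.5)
The plan is to prove the three parts in order. Parts (i) and (iii) reduce quickly to earlier results; part (ii) needs a short induction on the number of zones $N$.

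For part (i), I will justify the monotonicity claims made just before the proposition. On a strict branch, the right one-zone problem on $Z_N$ has total resource $y$. By Lemma~\ref{lem_basic_nonlinear_equation} (applied with $\ell$ players and projects $Z_N$), its cumulative rate $\Gamma_R(y)=C_N$ is continuous and strictly decreasing in $y$.

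For the left side, I first note that the marked players receive total resource $\rho-y$ on the first $N-1$ zones, and this decreases as $y$ increases. By the layer identity \eqref{eq_layer_subset_rate}, $\Gamma_L$ is the sum of the left rates of the marked group. I will argue by induction on $N$ that a group with less resource in a strict restricted equilibrium has a larger summed rate. Concretely, I will show that the total rate of the marked players in the left table is strictly decreasing in their total resource. The base case is one zone, where \eqref{eq_one_zone_cj_formula} and Lemma~\ref{lem_basic_nonlinear_equation} apply. The inductive step is a monotone comparison: the rates of the unmarked players move the other way, but the aggregate loads are pinned down by the zone equations.

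Continuity follows from continuity of the roots of the basic equations and of the recursively glued solutions. The implicit function theorem applies because the gluing functions are strictly monotone. Hence $H=\Gamma_L-\Gamma_R$ is continuous and strictly increasing, so it has at most one zero, and bisection on a sign-changing bracket converges.

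For part (ii), take a strict restricted equilibrium. By Corollary~\ref{cor_local_form_inside_zones}, applied to the restricted game (its proof is verbatim), the restriction to each zone is the one-zone equilibrium with the realized zone resources. Each player's rates agree across zones by Proposition~\ref{thm_existence_uniqueness}, so the smooth-fit conditions hold. Taking $y$ to be the realized total investment of $Y_N$ in $Z_N$ gives $H(y)=0$, and recursively the left table is a strict restricted equilibrium of the smaller table. Uniqueness of the zero on the branch then shows that the construction returns this profile.

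For the converse, suppose the construction produces a strict profile. Each block is built by Proposition~\ref{thm_one_zone_formulas}, so the equimarginal equalities hold inside each zone. The individual split \eqref{eq_block_individual_split} makes the rate of $j$ equal to $c^j$ on both sides, and the budgets sum to $r^j$. All allowed coordinates are positive, so no inequality conditions arise within the restricted strategy sets. The equimarginal characterization of Proposition~\ref{thm_existence_uniqueness}, applied to the restricted game, therefore shows that the profile is its Nash equilibrium. Uniqueness follows from Rosen's argument, since Lemma~\ref{lem_concavity_antimonotonicity} holds on the smaller product set.

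For part (iii), the strict restricted equilibrium already has equal marginal rates $c^j$ on its active set $\{1,\ldots,k^j\}$. By Proposition~\ref{thm_existence_uniqueness}, it is the Nash equilibrium of the full game if and only if, in addition, $c_i^j(x)\le c^j$ for every inactive $i>k^j$. At such a coordinate $x_i^j=0$, so $c_i^j(x)=a_i/L_i(x)$, which is exactly \eqref{eq_block_boundary_certificate}.

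I expect the main obstacle to be the strict monotonicity of $\Gamma_L$ in part (i) for $N\ge3$. There, changing $y$ propagates through the nested gluing of the left table, and I would handle this by the induction just described.
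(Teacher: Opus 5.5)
Your treatment of parts (ii) and (iii) matches the paper's argument. The one-zone ingredients for part (i) are also fine: $\Gamma_R$ is strictly decreasing by Lemma~\ref{lem_basic_nonlinear_equation}, and in a one-zone left table a short computation shows that $\sum_{j\in Y_N}\beta^j$ is nonincreasing in the marked total while $C$ strictly decreases.

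\textbf{The gap.} It is the step you flag yourself: strict monotonicity of $\Gamma_L$ when the left table has two or more zones. Your inductive step (``the rates of the unmarked players move the other way, but the aggregate loads are pinned down by the zone equations'') is not an argument, and its heuristic is false in general. Already in a one-project table with two players, the unmarked player's rate $a(b+s)/(b+r^u+s)^2$ increases in the marked resource $s$ when $r^u>b+s$ and decreases when $r^u<b+s$. Making the induction work would require controlling how a change in $y$ propagates through every intermediate gluing root of the left table, which you do not do. The recovery claim in part (ii) uses uniqueness of the zero on a branch, so that part inherits the gap.

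\textbf{The missing idea.} The paper's route makes induction unnecessary for monotonicity. The computation in the proof of Lemma~\ref{lem_concavity_antimonotonicity} uses only $x_i^j(t)\ge0$ and a bound on the loads. It therefore gives $(x'-x)\cdot\bigl(\Phi(x')-\Phi(x)\bigr)<0$ for any two distinct nonnegative profiles, even with different resource vectors.

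Let $x,x'$ be strict restricted equilibria of the same cutoff table, with resources $r,r'$ and rates $c^j,c^{j\prime}$. Then $c_i^j=c^j$ on every allowed coordinate, and every forbidden coordinate vanishes in both profiles. The inner product therefore collapses to $\sum_j(r^{j\prime}-r^j)(c^{j\prime}-c^j)<0$.

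Now change only the marked players' resources, splitting a change $\delta$ of their total equally: $r^{j\prime}=r^j+\delta/\ell$ for $j\in Y_N$. This is legitimate because, by \eqref{eq_layer_subset_rate}, the marked aggregate rate depends on the marked players only through their total. The inequality becomes $(\delta/\ell)\sum_{j\in Y_N}(c^{j\prime}-c^j)<0$. Hence the left marked aggregate rate strictly decreases in $\rho-y$, i.e., $\Gamma_L$ strictly increases in $y$, for any number of zones at once.

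\textbf{A minor point.} Strict monotonicity does not supply the hypotheses of the implicit function theorem. Continuity of the glued aggregates should instead be carried along in the induction, as the paper does: the unique zero of a continuous, strictly monotone function varies continuously with a continuously varying parameter.
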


\begin{proof}
We argue by induction on the number of zones. As part of the induction hypothesis, we also show that, on every strict branch on which the recursive gluing roots exist, all aggregate quantities produced by the construction depend continuously on the external resource parameters.

The one-zone case follows from Lemma~\ref{lem_basic_nonlinear_equation} and the explicit formulas of Section~\ref{section_local_structure}. Suppose now that the assertion holds for all tables with at most $N-1$ zones, and consider a table with $N$ zones. The right-hand quantities $\Lambda_R(y)$, $W_R(y)$, and $\Gamma_R(y)$ are continuous in $y$ by the one-zone formulas. The left-hand quantities $\Lambda_L(y)$ and $W_L(y)$ are outputs of the recursive construction for an $(N-1)$-zone table with marked resource $\rho-y$, and are therefore continuous by the induction hypothesis. Hence $\Gamma_L$, $\Gamma_R$, and $H=\Gamma_L-\Gamma_R$ are continuous on every strict branch.

The monotonicity follows from the projectwise calculation used in the proof of Lemma~\ref{lem_concavity_antimonotonicity}. That calculation applies to two nonnegative profiles lying in a common bounded set, even when their resource vectors differ. If $x$ and $x'$ are strict restricted equilibria for the same active table, with resource vectors $r,r'$ and marginal rates $c^j,c^{j\prime}$, then
\(
    \sum_j(r^{j\prime}-r^j)(c^{j\prime}-c^j)
    =
    (x'-x)\cdot\bigl(\Phi(x')-\Phi(x)\bigr)
    <0.
\)
By the layer aggregation in \eqref{eq_layer_subset_rate}, a change in the total resource of the marked layer may be distributed equally among its players without changing its aggregate marginal rate. Hence the marked aggregate rate on the left strictly decreases with the resource assigned to the left table, and therefore $\Gamma_L(y)$ strictly increases with $y$. Lemma~\ref{lem_basic_nonlinear_equation} gives that $\Gamma_R(y)$ strictly decreases with $y$. Thus $H$ is strictly increasing, which proves part~\textup{(i)}.
Now let a strict restricted equilibrium be given. Its resource in the final block supplies a zero of $H$, because the marginal rates agree across the two sides. By part~\textup{(i)}, this zero is the only one on its strict branch, and formulas \eqref{eq_block_individual_rate}--\eqref{eq_block_individual_split}, followed by the induction hypothesis, recover the equilibrium. Conversely, a strict profile produced by the construction satisfies each player's budget constraint and equalizes that player's marginal rate over every allowed project. The equimarginal characterization therefore makes it a Nash equilibrium of the restricted game, which is unique by the argument of Section~\ref{section_existence_equimarginal}. This proves part~\textup{(ii)}.

Finally, the active-project equalities already hold for a strict restricted equilibrium. Adding \eqref{eq_block_boundary_certificate} gives exactly the full equimarginal conditions of Proposition~\ref{thm_existence_uniqueness}. This proves part~\textup{(iii)}.
\end{proof}

There are only
\(
    \binom{n+m-1}{m}
\)
weakly decreasing cutoff vectors. Exhaustive support enumeration, together with one-dimensional searches on the strict branches, therefore recovers the equilibrium in principle: for the true cutoff vector, every strict gluing root admits a rational sign-changing bracket, and the omitted-project inequalities identify the valid reconstruction. This establishes global recoverability by a finite support search, but does not assert a polynomial complexity bound. A complete implementation-level analysis is beyond the scope of the present paper, and the projected marginal-utility method remains preferable when the active structure is not expected to be sparse.

\section{Examples}\label{section_examples}

We now give four examples illustrating the fully active regime, the transition from the zero-baseline model to a multi-zone equilibrium, a genuine three-zone equilibrium, and the failure of cyclic best-response dynamics. Unless stated otherwise, numerical values are rounded to three decimal places.

\paragraph{A fully active equilibrium.}
Let
\[
    m=n=3,\qquad a=(12,7,3),\qquad b=(3,2,1),\qquad r=(2.2,2,1.8).
\]
Here $a$, $b$, and $q=a/b=(4,3.5,3)$ are all strictly decreasing. Moreover, the players' resources are deliberately chosen to be close to one another, so that a ``fully active'' equilibrium is expected a priori. Accordingly, Proposition~\ref{thm_one_zone_formulas} gives
\[
    C\approx4.582,\qquad
    (c^1,c^2,c^3)\approx(1.497,1.527,1.558),
    \quad \text{ and } \quad 
    x^\ast\approx
    \begin{pmatrix}
        1.258 & 0.681 & 0.261\\
        1.152 & 0.616 & 0.232\\
        1.046 & 0.552 & 0.203
    \end{pmatrix}.
\]
The smallest effective project level is
\(
    p_3={a_3}/{L_3}\approx1.769,
\)
whereas the largest player rate is
\(
    c^3\approx1.558.
\)
Thus, Remark~\ref{rem_full_activity_check} verifies the full-activity assumption without requiring a separate check of all nine coordinates.

\paragraph{From the zero-baseline benchmark to three zones.}
Next, we study the effect of the baseline loads.
The positive baselines can be introduced continuously while the equilibrium support changes discretely. Consider the same returns and resources as in the next example,
\[
    a=(10,9,8,3,2),
    \qquad
    r=(15,14,3,1),
\]
and scale the baseline vector according to
\[
    b(\eps)=\eps(1,1,1,1,1),
    \qquad
    0\le\eps\le1.
\]
At $\eps=0$, interpreted as the zero-baseline benchmark, Proposition~\ref{prop_zero_baseline_equilibrium} gives
\(
    x_i^{j,\ast}(0)
    =
    r^j{a_i}/{32},
\)
so every player invests in every project.
As $\eps$ increases, the equilibrium remains continuous, but coordinates hit zero at distinct boundary values. Solving the corresponding boundary equalities numerically gives
\[
    \eps_1\approx0.297496,
    \qquad
    \eps_2\approx0.554617,
    \qquad
    \eps_3\approx0.822807.
\]
The cutoff vector evolves according to the following phase diagram:
\begin{center}
\begin{tikzpicture}[x=11.5cm,y=0.75cm]
\fill[black!4] (0,0.35) rectangle (0.297496,1.15);
\fill[black!8] (0.297496,0.35) rectangle (0.554617,1.15);
\fill[black!12] (0.554617,0.35) rectangle (0.822807,1.15);
\fill[black!16] (0.822807,0.35) rectangle (1,1.15);
\draw (0,0.35) rectangle (1,1.15);
\draw (0.297496,0.35) -- (0.297496,1.15);
\draw (0.554617,0.35) -- (0.554617,1.15);
\draw (0.822807,0.35) -- (0.822807,1.15);
\node[font=\scriptsize] at (0.148748,0.75) {$(5,5,5,5)$};
\node[font=\scriptsize] at (0.426057,0.75) {$(5,5,5,4)$};
\node[font=\scriptsize] at (0.688712,0.75) {$(5,5,5,3)$};
\node[font=\scriptsize] at (0.911404,0.75) {$(5,5,4,3)$};
\node[font=\scriptsize,align=center] at (0.297496,1.48) {$x_5^{4,\ast}\downarrow0$};
\node[font=\scriptsize,align=center] at (0.554617,1.48) {$x_4^{4,\ast}\downarrow0$};
\node[font=\scriptsize,align=center] at (0.822807,1.48) {$x_5^{3,\ast}\downarrow0$};
\draw[->] (0,0) -- (1.04,0) node[right] {$\eps$};
\foreach \x/\lab in {0/0,0.297496/\eps_1,0.554617/\eps_2,0.822807/\eps_3,1/1}
    \draw (\x,0.06) -- (\x,-0.06) node[below=2pt,font=\scriptsize] {$\lab$};
\end{tikzpicture}
\end{center}
Thus, the positive baseline does more than perturb the proportional allocation: as its scale increases, the equilibrium passes through three support transitions and develops the nested zone structure proved in Proposition~\ref{thm_cutoff_zone_structure}. At $\eps=1$, this family coincides with the three-zone example considered next.

\paragraph{A three-zone equilibrium and Block Pandora.}
Consider
\[
    a=(10,9,8,3,2),\qquad
    b=(1,1,1,1,1),\qquad
    r=(15,14,3,1).
\]
As already mentioned above, such parameters lead to an equilibrium consisting of multiple zones. More precisely, a numerical simulation produces the equilibrium cutoffs and zones given by
\[
    (k^1,k^2,k^3,k^4)=(5,5,4,3),
    \qquad
    Z_1=\{1,2,3\},\
    Z_2=\{4\},\
    Z_3=\{5\}.
\]
The final frame of a numerical Block Pandora run is shown in Figure~\ref{fig_block_pandora_example}. Moreover, the boundary checks
\[
    \frac{a_4}{L_4}\approx0.801<c^4\approx0.827,
    \qquad
    \frac{a_5}{L_5}\approx0.755<c^3\approx0.776
\]
confirm the displayed support. In this implementation, the candidate support was enlarged from left to right until the boundary inequalities were satisfied.

\begin{figure}[H]
    \centering
    \includegraphics[width=0.90\textwidth]
        {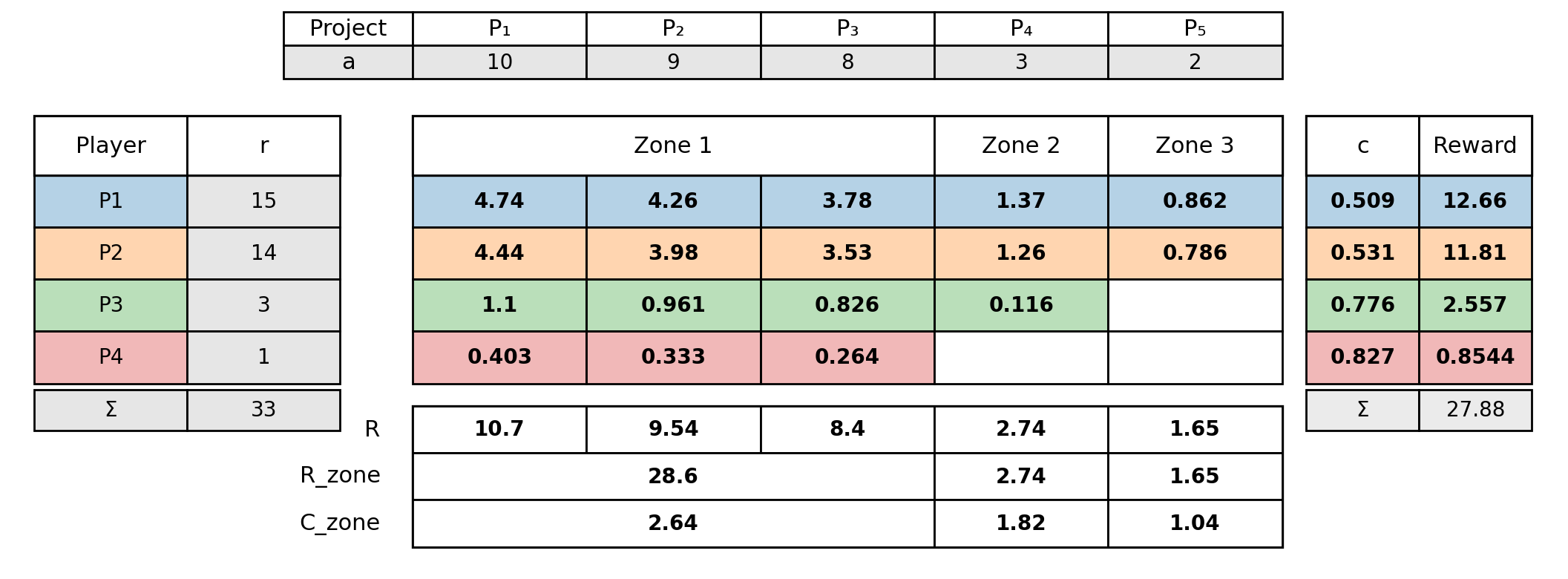}
    \caption{Final frame of a numerical Block Pandora run. Entries are equilibrium allocations; blank cells denote zero allocations, and the vertical divisions indicate the zones.}
    \label{fig_block_pandora_example}
\end{figure}


\paragraph{Best response may cycle.}
Finally, we give an example that motivates Section~\ref{section_algorithms} and shows that cyclic best response need not converge. Take
\[
    a=(9,4),\qquad b=(2,1),\qquad r=(70,1),
\]
so that $a$, $b$, and $a/b$ are all strictly decreasing.

Let $s_-$ be the first coordinate of player~$1$'s best response when player~$2$ chooses $(0,1)$, and $s_+$ the corresponding coordinate when player~$2$ chooses $(1,0)$. For the present parameters, player~$2$'s best response to $(s_-,70-s_-)$ is $(1,0)$, whereas her best response to $(s_+,70-s_+)$ is $(0,1)$. Hence, under cyclic exact best response in the order player~$2$, then player~$1$, one obtains the strict two-cycle
\[
    \begin{pmatrix}
        s_-&70-s_-\\
        0&1
    \end{pmatrix}
    \longmapsto
    \begin{pmatrix}
        s_+&70-s_+\\
        1&0
    \end{pmatrix}
    \longmapsto
    \begin{pmatrix}
        s_-&70-s_-\\
        0&1
    \end{pmatrix}.
\]
The two values can be obtained analytically from the one-dimensional best-response conditions:
\[
    s_-=212/5=42.4,
    \qquad
    s_+=\frac{213\sqrt3-6}{3\sqrt3+2}\approx50.433.
\]
The unique Nash equilibrium is instead
\[
    x^\ast\approx
    \begin{pmatrix}
        48.398&21.602\\
        0.762&0.238
    \end{pmatrix}.
\]
Writing $s^\ast\approx48.398$ for the first coordinate of player~$1$ at equilibrium, empirical analysis shows that $s^\ast$ is a repelling fixed point and that every initialization with $x_1^1\ne s^\ast$ is attracted to the displayed cycle. Combined with the rapid convergence observed in many other parameter regimes, this suggests a nontrivial stability geometry: characterizing the regions of convergence, periodicity, and more complicated behavior is an interesting open problem.

\section{Extensions and Open Problems}\label{section_open_problems}

We conclude the paper by discussing open problems and extensions.
The model studied in this paper is deliberately static and deterministic, uses the same project parameters for every player, and is built around a particular rational payoff function. These assumptions make possible the explicit one-zone formulas and the constructive solution, but each of them suggests a natural extension. We describe the directions that seem most immediate.

\paragraph{Player-specific project parameters and participation constraints.}
The most important extension, in our view, is to allow different players to have different returns from the same project. The simplest such model replaces $a_i$ by $a_i^j$, while a more general version also replaces $b_i$ by $b_i^j$. Player-specific parameters may represent different technologies, information, expertise, or other comparative advantages. Existence continues to follow from the standard theory of concave games under the natural positivity assumptions, but uniqueness and the ordered structure proved in this paper are no longer automatic. In particular, the players may rank the projects differently, so there need not be a common ordering by $a_i/b_i$, and nested cutoff sets may fail. A natural problem is to identify conditions, such as a common ranking or an appropriate single-crossing property, under which some version of the cutoff and zone structure survives.

A closely related extension is to restrict player $j$ to a prescribed set of projects $E^j\subseteq\{1,\ldots,n\}$. The strategy set then becomes
\(
    X^j(E^j)
    =
    \left\{
        x^j\in\mathbb R_+^n:
        \sum_{i=1}^n x_i^j=r^j,
        \quad x_i^j=0 \text{ for } i\notin E^j
    \right\}.
\)
This can be viewed formally as a zero-return specialization of the player-specific model, but the constrained formulation is cleaner and avoids allocations to unavailable projects. 

\paragraph{More general payoff functions.}
The rational function used in this paper is simple enough to be tractable, but already captures the two effects that motivated the model: diminishing returns and the crowding effect. It is therefore natural to consider separable payoffs of the form
\(
    F^j(x)
    =
    \sum_{i=1}^n
    \varphi_i^j\bigl(x_i^j,u_i^j\bigr),
\)
where $\varphi_i^j$ is increasing and concave in its first argument and decreasing in its second. The existence and equimarginal parts of the theory extend under standard concavity assumptions. Moreover, the proof of convergence of the projected marginal-utility algorithm uses only convexity of the strategy set together with Lipschitz continuity and strong anti-monotonicity of the pseudo-gradient. Thus, the same algorithm and proof apply to any class of payoff functions for which these two quantitative properties can be verified.
The more delicate questions concern the structural results. Our explicit formulas, and in particular the reduction of the fully active equilibrium to one scalar nonlinear equation, use the precise rational form of the payoff in an essential way. It would be useful to characterize the larger class of functions for which a comparable aggregate reduction remains possible. Even when no explicit formula survives, one may ask for conditions guaranteeing ordered active sets, monotonicity across players, and a structure-exploiting analogue of Block Pandora.

\paragraph{Stochastic project characteristics.}
Another broad direction is to introduce randomness into the project parameters. One may replace $a_i$ and $b_i$ by random variables $A_i$ and $B_i$, beginning, for example, with two-point or Bernoulli models and then passing to general distributions. There is, however, a simple distinction to keep in mind. If only $A_i$ is random, all players are risk neutral, the allocation is chosen before the realization, and all players have the same prior, then taking expectations merely replaces $a_i$ by $\mathbb E A_i$; this model reduces to the deterministic one. A genuinely stochastic problem arises when the congestion parameters or project availability are random, when players receive different signals, when investments can be revised after partial information is observed, or when the players use nonlinear risk criteria.
These variants lead naturally to Bayesian games. Among the first questions are whether pure or monotone Bayesian equilibria exist, whether posterior project rankings produce state-dependent cutoff structures, and whether the projected marginal-utility method can be adapted to expected or sample-based pseudo-gradients. 

\paragraph{Dynamic and multi-stage games.}
The present model has a single investment horizon. A repeated version with unchanged parameters and no intertemporal coupling would simply reproduce the same static equilibrium at every date, so an interesting dynamic model must allow the state to evolve. For example, the project parameters may be stochastic processes $A_i(t)$ and $B_i(t)$; players and projects may enter or leave; resources may be carried from one date to the next; and past returns may affect future budgets. One may also impose adjustment costs, irreversibility, or limits on how quickly an allocation can be changed.

\paragraph{Social welfare and mechanism design.}
Finally, one can replace individual payoff maximization by social-welfare maximization. Indeed, the noncooperative equilibrium need not maximize the sum of the players' payoffs. This direction already has a simple first answer. Let
\(
    y_i=\sum_{j=1}^m x_i^j
\)
and 
\(
    R=\sum_{j=1}^m r^j.
\)
Then social welfare depends only on the aggregate investments:
\[
    W(x)
    :=\sum_{j=1}^m F^j(x)
    =\sum_{i=1}^n \frac{a_i y_i}{b_i+y_i},
    \qquad
    \sum_{i=1}^n y_i=R.
\]
The planner's problem is therefore a strictly concave one-player allocation problem. Its unique aggregate solution is
\begin{equation}\label{eq_social_optimum_open_problems}
    y_i^{\mathrm{SO}}
    =
    \left(
        \sqrt{\frac{a_i b_i}{\lambda}}-b_i
    \right)_+,
\end{equation}
where $\lambda>0$ is uniquely chosen so that $\sum_i y_i^{\mathrm{SO}}=R$. Thus, ordering projects by $q_i=a_i/b_i$, the social optimum also has a cutoff structure. More explicitly, if the first $k$ projects are active, then
\[
    \sqrt{\lambda}
    =
    \frac{\sum_{i=1}^k\sqrt{a_i b_i}}
    {R+\sum_{i=1}^k b_i},
\]
with $q_k>\lambda\ge q_{k+1}$, where $q_{n+1}:=0$. The aggregate solution is unique, although its decomposition among players is not; one feasible socially optimal profile is
\(
    x_i^{j,\mathrm{SO}}
    =
    r^j y_i^{\mathrm{SO}} / R.
\)

The social optimum can also be decentralized by charging each player for the congestion externality she imposes on the others. Recall that $u_i^j=\sum_{\ell\ne j}x_i^\ell$. Define the fee charged to player $j$ on project $i$ by
\begin{equation}
    \tau_i^j(x)
    :=\frac{a_i u_i^j}{b_i+u_i^j}
      -\frac{a_i u_i^j}{b_i+u_i^j+x_i^j}
    =
    \frac{a_i u_i^j x_i^j}
    {(b_i+u_i^j)(b_i+u_i^j+x_i^j)}.
\end{equation}
This is exactly the reduction in the total payoff of the other players on project $i$ caused by player $j$'s investment. If the modified payoff is
\(
    \widetilde F^j(x)
    :=F^j(x)-\sum_{i=1}^n\tau_i^j(x),
\)
then, for fixed $x^{-j}$,
\[
    \widetilde F^j(x)
    =
    W(x)
    -
    \sum_{i=1}^n\frac{a_i u_i^j}{b_i+u_i^j},
\]
and the last term is independent of $x^j$. Hence, the modified game is an exact potential game with potential $W$, and its Nash equilibria coincide with the socially optimal profiles.

This calculation opens several mechanism-design questions. Since only the aggregate vector $y^{\mathrm{SO}}$ is unique, how should an organizer select a particular division of the projects among the players? Can the corrective fees be made budget balanced, individually rational, or robust to private project parameters? For the original unregulated game, it is also natural to study the efficiency loss of equilibrium, including sharp price-of-anarchy bounds and refinements that use the resource vector, project qualities, or the equilibrium zone structure.

The general question behind all these extensions is which parts of the theory are consequences of concavity and strategic congestion, and which rely on the special algebraic form of the present model. In particular, the robustness of uniqueness, nested activity, the aggregate scalar reduction, and monotone constructive algorithms appears to deserve further study.

\bigskip
\noindent
\textbf{Acknowledgments:} We are grateful to Jan Vecer for stimulating discussions and especially for drawing our attention to the social-welfare and mechanism design extension.

\bigskip
\noindent
\textbf{Declaration of generative AI and AI-assisted technologies in the manuscript preparation process:}
During the preparation of this work, the authors used OpenAI Chat GPT to assist with English-language editing, consistency checking, formatting, and submission preparation. After using this tool, the authors reviewed and edited the content as needed and take full responsibility for the content of the publication.

\begin{appendices}

\section{Auxiliary Proofs}\label{appendix_auxiliary_proofs}

\subsection{The zero-baseline benchmark}\label{appendix_zero_baseline_equilibrium}

\begin{Prop}\label{prop_zero_baseline_equilibrium}
Assume $m\ge2$, and consider the game
\[
    \widehat F^j(x)
    =
    \sum_{i=1}^n
    a_i\frac{x_i^j}{\sum_{\ell=1}^m x_i^\ell},
\]
where an unclaimed project yields zero payoff. Its unique Nash
equilibrium is
\[
    x_i^{j,\ast}
    =
    r^j\frac{a_i}{\sum_{h=1}^n a_h},
    \qquad
    i=1,\ldots,n,\quad j=1,\ldots,m.
\]
\end{Prop}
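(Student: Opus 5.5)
We verify the claimed profile directly and then prove uniqueness. Write $S:=\sum_h a_h$, $R=\sum_j r^j$, and $y_i=\sum_\ell x_i^\ell$. The payoff is discontinuous at $y_i=0$, and this is where the convention that an unclaimed project yields zero matters.

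\emph{Step 1 (verification).} At the candidate profile $y_i=Ra_i/S>0$ for every $i$. For fixed opponents, each summand $a_ix_i^j/(x_i^j+u_i^j)$ with $u_i^j=(R-r^j)a_i/S>0$ is smooth, increasing and strictly concave in $x_i^j\ge0$. Its derivative is $a_iu_i^j/y_i^2$. At the candidate this equals
\[
\frac{a_i (R-r^j)a_i/S}{R^2a_i^2/S^2}=\frac{(R-r^j)S}{R^2},
\]
which is independent of $i$. Hence the equimarginal argument of Proposition~\ref{thm_existence_uniqueness} (its sufficiency part uses only concavity on $X^j$) shows that the candidate is a best response for each $j$. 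Here $m\ge2$ guarantees $R-r^j>0$.

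\emph{Step 2 (every equilibrium has all $y_i>0$).} Suppose some project $i$ has $y_i=0$ at an equilibrium. Any player can move $\varepsilon>0$ into $i$ and collect all of $a_i$, while losing only $O(\varepsilon)$ elsewhere, by continuity of her other summands. This works because some other project $k$ carries her resource, and her payoff from $k$ changes continuously provided $y_k>0$. This is a profitable deviation, a contradiction.

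Next suppose some player $j$ has $x_i^j=0$ while $y_i>0$. Then her marginal value at $i$ is $a_i/u_i^j$, which is finite and positive. The binding case is when she is the sole investor in $i$ (so $u_i^j=0$). Her marginal utility there is $a_iu_i^j/y_i^2=0$, so she would shift mass away from $i$ toward any project where her marginal is positive. Driving $x_i^j$ to $0$ is impossible without leaving $i$ unclaimed, and that contradicts the deviation in the previous paragraph. In this case we conclude that any player monopolizing a project with $x_i^j>0$ keeps an arbitrarily small amount there; iterating the strict improvement gives that no best response exists, a contradiction.

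Hence at equilibrium every project has at least two investors, or each player is arranged so that the KKT conditions hold with $u_i^j>0$ on active projects. In particular every player faces a smooth concave problem near the equilibrium, with equimarginal conditions $a_iu_i^j/y_i^2=c^j$ on $x_i^j>0$ and $a_i/u_i^j\le c^j$ on $x_i^j=0$.

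\emph{Step 3 (uniqueness).} Summing $x_i^j=y_i-c^jy_i^2/a_i$ over active players, I will first show that all players are active everywhere. The inactive condition $a_i/u_i^j\le c^j$, with $u_i^j=y_i$, reads $a_i/y_i\le c^j$. Meanwhile an active player $\ell$ satisfies $c^\ell<a_i/y_i$, so $c^\ell<c^j$.

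On the other hand, project $i$ has at least two active players, hence some active player $\ell$ with $x_i^\ell$ large. I expect to reach a contradiction via the sum identity
\[
\sum_{\ell\,\text{active}}c^\ell=(m_i-1)a_i/y_i .
\]
This is the step I expect to be the main obstacle, and I would try it first by comparing projects through the players' rates.

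Once full activity is established, summing over $j$ gives $C y_i^2/a_i=(m-1)y_i$, so $y_i=(m-1)a_i/C$. Summing over $i$ gives $C=(m-1)S/R$, hence $y_i=Ra_i/S$. Then
\[
x_i^j=y_i\bigl(1-c^jy_i/a_i\bigr),
\]
and summing over $i$ fixes $c^j$ from $r^j$. This yields $x_i^j=r^ja_i/S$, so the equilibrium is unique.
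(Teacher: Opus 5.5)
\textbf{Gap in Step 3.} Your Step 3 leaves open the one claim the uniqueness proof depends on: that every player is active on every project. You write that you ``expect to reach a contradiction via'' the sum identity, but you never derive it. Your closing computation ($C=(m-1)S/R$, $y_i=Ra_i/S$, $x_i^j=r^ja_i/S$) is correct, but all of it assumes full activity, so as written there is no uniqueness proof.

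The missing idea is that your identity fixes the effective level $p_i:=a_i/y_i$ independently of $i$. The KKT conditions say $x_i^j/y_i=(1-c^j/p_i)_+$ for all $i,j$. Summing over $j$, and using $\sum_j x_i^j=y_i$ (there is no baseline), gives $h(p_i)=1$ with $h(p):=\sum_{j=1}^m(1-c^j/p)_+$. This is just your identity $\sum_{\ell\ \text{active}}c^\ell=(m_i-1)p_i$ rewritten.

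The function $h$ is continuous and nondecreasing. It is strictly increasing wherever it is positive, because some term with $c^\ell<p$ is then strictly increasing. Hence $h(p)=1$ has a unique root $p$, and $p_i=p$ for every $i$. Each player has $r^j>0$, so she is active somewhere, which gives $c^j<p=p_i$ for all $i$. So she is active everywhere.

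In your contradiction form: if $j$ were inactive on $i$ and active on $k$, then $p_i\le c^j<p_k$, and strict monotonicity would give $1=h(p_i)<h(p_k)=1$. This is exactly how the paper closes the argument.

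\textbf{Step 2.} Steps 1 and 2 agree in substance with the paper: you verify the candidate by the equimarginal condition, show every project is claimed, and show no project has a sole investor, so $u_i^j>0$ everywhere. However, the second paragraph of Step 2 is garbled. It starts from $x_i^j=0$ and then switches to the case $u_i^j=0$. The ``no best response exists'' iteration is also unnecessary.

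All you need is this: a sole investor in $i$ keeps the full $a_i$ after moving a small amount to a project $k$ with $u_k^j>0$. Such a $k$ exists because $m\ge2$ and every $r^\ell>0$, and you should say so explicitly. Her payoff on $k$ strictly increases, so this single deviation already contradicts equilibrium.
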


\begin{proof}
Put $A:=\sum_i a_i$ and $R:=\sum_j r^j$. At the displayed profile, the opponents of player $j$ invest $(R-r^j)a_i/A$ in project $i$. Consequently, player $j$'s marginal payoff on every project equals $A(R-r^j)/R^2$. Since her payoff is strictly concave in her own allocation, the equimarginal condition shows that this profile is a Nash equilibrium.

For uniqueness, let $x$ be any Nash equilibrium and write $L_i:=\sum_jx_i^j$. Every project must be claimed: otherwise, a player can transfer an arbitrarily small amount from a project she uses to an unclaimed project, obtain the payoff $a_i$ there, and incur an arbitrarily small loss on the former project. Moreover, no project can be used by only one player. Indeed, its sole user can transfer a small amount to a project used by another player, retain the full payoff $a_i$ on the first project, and obtain a strict gain on the second. Thus, $L_i-x_i^j>0$ for every $i,j$, and every player's best-response problem is smooth and strictly concave.

Let $c^j>0$ be the rate associated with player $j$'s resource constraint, and set $p_i:=a_i/L_i$. The KKT conditions give
\[
    \frac{x_i^j}{L_i}
    =
    \left(1-\frac{c^j}{p_i}\right)_+,
    \qquad
    1
    =
    \sum_{j=1}^m
    \left(1-\frac{c^j}{p_i}\right)_+
    =:h(p_i).
\]
The function $h$ is continuous and strictly increasing whenever it is positive, and satisfies $\lim_{p\to\infty}h(p)=m>1$. Hence $h(p)=1$ has a unique solution, so $p_i=p$ for all $i$. Because every player has a positive resource, every player is active on some project and therefore, since $p_i$ is independent of $i$, on every project. Thus, $x_i^j/L_i=\alpha^j$ is independent of $i$. The resource constraints imply $\alpha^j=r^j/R$, while $L_i=a_i/p$ and $\sum_iL_i=R$ imply $p=A/R$. Therefore, $x_i^j=r^ja_i/A$, proving uniqueness.
\end{proof}

\subsection{Proof of Lemma \ref{lem_concavity_antimonotonicity}}\label{appendix_lem_concavity_antimonotonicity}

\begin{proof}
We first prove the concavity. Fix a player $j$ and the opponents' profile $x^{-j}$. With $\widehat b_i:=b_i+u_i^j>0$, the contribution of project $i$ as a function of $t=x_i^j$ is $g_i(t)=a_it/(\widehat b_i+t)$, and
\[
    g_i''(t)=-\frac{2a_i\widehat b_i}{(\widehat b_i+t)^3}<0.
\]
Thus, $F^j(\cdot,x^{-j})$ is strictly concave as the sum of concave coordinatewise contributions.

For the quantitative estimate, take $x,y\in X$ and set $d:=x-y$, $d_i^j:=x_i^j-y_i^j$, $d_i:=(d_i^1,\ldots,d_i^m)$, and $\Delta_i:=\sum_jd_i^j$. Let $x(t):=y+td$ and $L_i(t):=L_i(x(t))$, so $b_i\le L_i(t)\le b_i+R$. Direct differentiation of $c_i^j(x(t))=a_i(L_i(t)-x_i^j(t))/L_i(t)^2$ gives
\[
    \frac{d}{dt}\sum_{j=1}^m d_i^j c_i^j(x(t))
    =\frac{a_i}{L_i(t)^3}\left[-L_i(t)\bigl(\Delta_i^2+\lVert d_i\rVert^2\bigr)+2\Delta_i\sum_{j=1}^m x_i^j(t)d_i^j\right].
\]
For every player $j$, we have
\(
    2\Delta_i d_i^j
    \le \Delta_i^2+(d_i^j)^2.
\)
Since $x_i^j(t)\ge0$, multiplying these inequalities by $x_i^j(t)$ and summing over $j$ yields
\begin{align*}
    2\Delta_i\sum_{j=1}^m x_i^j(t)d_i^j
    \le
    \sum_{j=1}^m
    x_i^j(t)\bigl(\Delta_i^2+(d_i^j)^2\bigr)
    \le
    \left(\sum_{j=1}^m x_i^j(t)\right)
    \bigl(\Delta_i^2+\lVert d_i\rVert^2\bigr)
    =
    \bigl(L_i(t)-b_i\bigr)
    \bigl(\Delta_i^2+\lVert d_i\rVert^2\bigr).
\end{align*}
Consequently,
\[
    \frac{d}{dt}\sum_{j=1}^m d_i^j c_i^j(x(t))
    \le-\frac{a_ib_i}{L_i(t)^3}\bigl(\Delta_i^2+\lVert d_i\rVert^2\bigr)
    \le-\frac{a_ib_i}{(b_i+R)^3}\lVert d_i\rVert^2.
\]
Integrating over $t\in[0,1]$, summing over $i$, and using the definition of $\mu$ yields
\[
    (x-y)\cdot\bigl(\Phi(x)-\Phi(y)\bigr)
    \le-\sum_{i=1}^n\frac{a_ib_i}{(b_i+R)^3}\lVert d_i\rVert^2
    \le-\mu\lVert x-y\rVert^2.
\]
This is \eqref{eq_strong_antimonotonicity} and completes the proof.
\end{proof}

\subsection{Proof of existence and uniqueness in Proposition~\ref{thm_existence_uniqueness}}\label{appendix_existence_uniqueness}

\begin{proof}
For each player $j$, the strategy set $X^j$ is nonempty, compact, and convex. The payoff $F^j$ is continuous on $X$ and is concave in the strategy of player $j$ when the other players' strategies are fixed. Thus, the usual Kakutani fixed-point argument for concave games gives existence of a Nash equilibrium: the best-response correspondence has nonempty compact convex values and a closed graph, so it has a fixed point.

As for uniqueness, suppose that $x$ and $y$ are two Nash equilibria. Since $x^j$ maximizes $F^j(\cdot,x^{-j})$ over the convex set $X^j$, the first-order optimality condition for a concave maximization problem gives
\(
    \nabla_{x^j}F^j(x)\cdot (y^j-x^j)\le0.
\)
Similarly,
\(
    \nabla_{x^j}F^j(y)\cdot (x^j-y^j)\le0.
\)
Adding these inequalities over all players $j$ yields
\(
    (x-y)\cdot\bigl(\Phi(x)-\Phi(y)\bigr)\ge0.
\)
If $x\ne y$, this contradicts the strong anti-monotonicity in Lemma~\ref{lem_concavity_antimonotonicity}. Hence $x=y$.
\end{proof}

\subsection{Proof of Lemma \ref{lem_basic_nonlinear_equation}}\label{appendix_lem_basic_nonlinear_equation}

\begin{proof}
We show that the function $G$ in \eqref{eq_basic_nonlinear_equation} is continuous and strictly decreasing on $(0,\infty)$, with
\(
    \lim_{C\downarrow0}G(C)=+\infty,
    \,
    \lim_{C\to\infty}G(C)=0.
\)
The limits and continuity are immediate from the explicit formula \eqref{Li_solution}.  To prove monotonicity, substitute $L_i$ by $G_i(C)$ in \eqref{eq_one_zone_load_quadratic} and differentiate the identity with respect to $C$. This gives
\[
    \frac{G_i(C)^2}{a_i}
    +\left(\frac{2CG_i(C)}{a_i}-(m-1)\right)G_i'(C)=0.
\]
Dividing \eqref{eq_one_zone_load_quadratic} by $G_i(C)>0$ (again, after substituting $L_i$ by $G_i(C)$), we get
\[
    \frac{CG_i(C)}{a_i}=(m-1)+\frac{b_i}{G_i(C)}, 
    \quad \text{ thus also }\quad 
    \frac{2CG_i(C)}{a_i}-(m-1)
    =(m-1)+\frac{2b_i}{G_i(C)}>0.
\]
It follows that $G_i'(C)<0$ for each $i$, and hence $G'(C)<0$.  Thus, $G$ maps $(0,\infty)$ continuously and strictly decreasingly onto $(0,\infty)$, so \eqref{eq_basic_nonlinear_equation} has a unique solution.  Since increasing $R$ increases the left-hand side $R+B$, while $G$ is decreasing in $C$, the corresponding solution $C$ decreases.
\end{proof}

\subsection{Proof of Lemma \ref{lem_pseudogradient_lipschitz}}\label{appendix_lem_pseudogradient_lipschitz}

\begin{proof}
For each project $i$, write
\(
    x_i:=(x_i^1,\ldots,x_i^m),
    \,
    c_i(x):=(c_i^1(x),\ldots,c_i^m(x)),
\)
and let $J_i(x)$ be the Jacobian of $c_i$ with respect to $x_i$. Differentiating
\[
    c_i^j(x)=\frac{a_i\bigl(L_i(x)-x_i^j\bigr)}{L_i(x)^2}
\]
gives
\[
    [J_i(x)]_{j\ell}
    =
    \begin{cases}
        \displaystyle
        -\frac{2a_i\bigl(L_i(x)-x_i^j\bigr)}{L_i(x)^3},
        & \ell=j,\\[7pt]
        \displaystyle
        \frac{a_i\bigl(2x_i^j-L_i(x)\bigr)}{L_i(x)^3},
        & \ell\ne j.
    \end{cases}
\]
Since $L_i(x)\ge b_i$ and $0\le x_i^j\le L_i(x)$, the diagonal entries have absolute value at most $2a_i/b_i^2$, and the off-diagonal entries have absolute value at most $a_i/b_i^2$. Hence both the maximum row sum and the maximum column sum of $J_i(x)$ are bounded by $(m+1)a_i/b_i^2$, and therefore
\[
    \lVert J_i(x)\rVert_2
    \le
    \sqrt{\lVert J_i(x)\rVert_1\lVert J_i(x)\rVert_\infty}
    \le
    (m+1)\frac{a_i}{b_i^2}
    \le K_\Phi.
\]

Now let $x(t):=y+t(x-y)$. Since $c_i$ depends only on the coordinates of project $i$, the fundamental theorem of calculus gives
\(
    c_i(x)-c_i(y)
    =
    \int_0^1
    J_i(x(t))\bigl(x_i-y_i\bigr)\,dt,
\)
and hence
\(
    \lVert c_i(x)-c_i(y)\rVert
    \le
    K_\Phi\lVert x_i-y_i\rVert.
\)
Finally, regrouping the coordinates of $\Phi$ projectwise,
\(
    \lVert\Phi(x)-\Phi(y)\rVert^2
    =
    \sum_{i=1}^n
    \lVert c_i(x)-c_i(y)\rVert^2
    \le
    K_\Phi^2
    \sum_{i=1}^n
    \lVert x_i-y_i\rVert^2
    =
    K_\Phi^2\lVert x-y\rVert^2.
\)
Taking square roots proves \eqref{eq_pseudogradient_lipschitz}.
\end{proof}

\end{appendices}

\printbibliography

\end{document}